	\newtheorem{theorem}{Theorem}
	\newtheorem{lemma}[theorem]{Lemma}
\newcommand{\gv}[1]{\ensuremath{\text{\boldmath$ #1 $}}}
\newcommand{\abs}[1]{\left| #1 \right|} 
\newcommand{\norm}[1]{\left\| #1 \right\|} 
\let\baraccent=\= 
\renewcommand{\=}[1]{\stackrel{#1}{=}} 
\newcommand{\aref}[1]{\hyperref[#1]{Appendix~\ref{#1}}}
\newcommand{\lref}[1]{\hyperref[#1]{Lemma~\ref{#1}}}
\begin{document}

\title{Efficient quantum circuits for dense circulant and circulant-like operators}
\author{S. S. Zhou}
\address{Kuang Yaming Honors School, Nanjing University, Nanjing, 210093, China}
\author{J. B. Wang\corref{cor1}}
\cortext[cor1]{jingbo.wang@uwa.edu.au}
\address{School of Physics, The University of Western Australia, Perth WA 6009, Australia}


\begin{abstract}
Circulant matrices are an important family of operators, which have a wide range of applications in science and engineering related fields. They are in general non-sparse and non-unitary.  In this paper, we present efficient quantum circuits to implement circulant operators using fewer resources and with lower complexity than existing methods.  Moreover, our quantum circuits can be readily extended to the implementation of Toeplitz, Hankel, and block circulant matrices.  Efficient quantum algorithms to implement the inverses and products of circulant operators are also provided, and an example application in solving the equation of motion for cyclic systems is discussed. 
\end{abstract}


\maketitle

\section{Introduction}
\label{sec:introduction}

Quantum computation exploits the intrinsic nature of quantum systems in a way that promises to solve problems otherwise intractable on conventional computers.  At the heart of a quantum computer lies a set of qubits whose states are manipulated by a series of quantum logic gates, namely a quantum circuit, to provide the ultimate computational results.  A quantum circuit provides a complete description of a specified quantum algorithm, whose computational complexity is determined by the number of quantum gates required.  However, quantum computation does not always outperform classical computation. In fact there are many known $N$-dimensional matrices that cannot be decomposed as a product of fewer than $N-1$ two-level unitary matrices~\cite{nielsen2010quantum}, and thus cannot be implemented more efficiently on a quantum computer.
An essential research focus in quantum computation is to explore which kinds of linear operations (either unitary or non-unitary) can be efficiently implemented using a series of elementary quantum gates (i.e. two-level unitary matrices) and measurements.

Remarkable progress has been made in such an endeavour, most notably the discovery of Shor's quantum factoring algorithm~\cite{shor1997} and Grover's quantum search algorithms~\cite{grover1996}.
Significant breakthroughs in the area also included the development of efficient quantum algorithms for Hamiltonian simulation, which is central to the studies of chemical and biological processes~\cite{lloyd1996universal,berry2007efficient,childs2010simulating,wiebe2011simulating,poulin2011quantum,berry2015simulating,berry2015hamiltonian}. Recently, Berry, Childs and Kothari presented an algorithm for sparse Hamiltonian simulation achieving near-linear scaling with the sparsity and sublogarithmic scaling with the inverse of the error~\cite{berry2015hamiltonian}. Using the Hamiltonian simulation algorithm as an essential ingredient, Harrow, Hassidim and Lloyd~\cite{harrow2009} showed that for a sparse and well-conditioned matrix $A$, there is an efficient algorithm (known as the HHL algorithm) that provides a quantum state proportional to the solution of the linear system of equations $A\gv{x}=\gv{b}$.

However, as proven by Childs and Kothari~\cite{childs2009limitations}, it is impossible to perform a generic simulation of an arbitrary dense Hamiltonian $H$ in $\mathbb{C}^{N\times N}$ in time $O(poly(\norm{H}, \log N))$, where $\norm{H}$ is the spectral norm, but possible for certain nontrivial classes of Hamiltonians.  It is then natural to ask under what conditions we can extend the sparse Hamiltonian simulation algorithm and the HHL algorithm to the realm of dense matrices. In this paper, we utilise the ``unitary decomposition'' approach developed by Berry, Childs and Kothari~\cite{berry2015simulating} to implement dense circulant Hamiltonians in time $O(poly(\norm{H}, \log N))$. Combining this with the HHL algorithm, we can also efficiently implement the inverse of dense circulant matrices and thus solve systems of circulant matrix linear equations.

Furthermore, we provide an efficient algorithm to implement circulant matrices $C$ directly, by decomposing them into a linear combination of unitary matrices. We then apply the same technique to implement block circulant matrices, Toeplitz and Hankel matrices, which have significant applications in physics, mathematics and engineering~\cite{rietsch2003totally,haupt2010toeplitz,noschese2013tridiagonal,Olson2014,ng2004iterative,peller2012hankel,
kaveh2011block,rjasanow1994effective,tee2005eigenvectors,combescure2009block,petrou2010image}.
For example, we can simulate classical random walks on circulant, Toeplitz and Hankel graphs~\cite{Delanty2012,qiang2016efficient}.
In fact, any arbitrary matrix can be decomposed into a product of Toeplitz matrices~\cite{ye2015every}.  If the number of Toeplitz matrices required is in the order of $O(poly(\log N))$, we can have an efficient quantum circuit.

This paper is organised as follows. In \autoref{sec:circulant}, we present an algorithm to implement circulant matrices, followed by discussions on block circulant matrices, Toeplitz and Hankel matrices in \autoref{sec:beyond}. In \autoref{sec:Hamiltonian} and \autoref{sec:hhl}, we provide a method to simulate circulant Hamiltonians and to implement the inverse of circulant matrices. In \autoref{sec:product}, we describe a technique to efficiently implement products of circulant matrices. In the last section, we provide an example application in solving the equation of motion for vibrating systems with cyclic symmetry. 

\section{Implementation of Circulant Matrices}
\label{sec:circulant}

A circulant matrix has each row right-rotated by one element with respect to the previous row, defined as
\begin{equation}
\label{eq:circulant}
C =
\begin{pmatrix}
c_0 &c_1 &\cdots &\color{black}{c_{N-1}} \\
c_{N-1} &c_0 &\cdots &c_{N-2}\\
\vdots & \vdots & \ddots & \vdots\\
c_1 & c_2 & \cdots & c_0
\end{pmatrix},
\end{equation}
using an $N$-dimensional vector $\gv{c} = (c_0~c_1~\cdots~c_{N-1})$~\cite{golub2012}.  In this paper we will assume $c_j$ to be non-negative for all $j$, which is often the case in practical applications. We also assume that the spectral norm (the largest eigenvalue) $\norm{C}=\sum_{j=0}^{N-1}c_j$ of the circulant matrix $C$ equals to $1$ for simplicity.

Note that $C$ can be decomposed into a linear combination of efficiently realizable unitary matrices as follows,
\begin{equation}
\label{eq:decomposition}
C =
\begin{pmatrix}
c_0 &c_1 &\cdots &c_{N-1} \\
c_{N-1} &c_0 &\cdots &c_{N-2}\\
\vdots & \vdots & \ddots & \vdots\\
c_1 & c_2 & \cdots & c_0
\end{pmatrix}
= c_0
\begin{pmatrix}
1 & 0 &\cdots & 0 \\
0 & 1 &\cdots & 0\\
\vdots & \vdots & \ddots & \vdots\\
0 & 0 & \cdots & 1
\end{pmatrix}
+ c_1
\begin{pmatrix}
0 & 1 &\cdots & 0 \\
0 & 0 &\cdots & \vdots\\
\vdots & \vdots & \ddots & 1\\
1 & 0 & \cdots & 0
\end{pmatrix}
+ \cdots
= \sum_{j=0}^{N-1} c_j V_j,
\end{equation}
where $V_j= \sum_{k=0}^{N-1}\ket{(k-j) \mod N}\bra{k}$.  Such a linear combination of unitary matrices can be dealt with by the unitary decomposition approach introduced by Berry \emph{et al.}~\cite{berry2015simulating}.  For completeness, we restate their method as \lref{thm:sum} given below.

\begin{lemma}
\label{thm:sum}
Let $M = \sum_{\alpha_j}\alpha_j W_j$ be a linear combination of unitaries $W_j$ with $\alpha_j \geq 0$ for all $j$ and $\sum_j \alpha_j = 1$. Let $O_\alpha$ be any operator that satisfies $O_\alpha \ket{0^m} = \sum_j \sqrt{\alpha_j}\ket{j}$, where $m$ is the number of qubits used to represent $\ket{j}$, and $\mathrm{select}(W) = \sum_j\ket{j}\bra{j} \otimes W_j$. Then
\begin{equation}
(O_\alpha^\dagger \otimes I)\mathrm{select}(W) (O_\alpha \otimes I) \ket{0^m}\ket{\psi} = \ket{0^m}M\ket{\psi} + \ket{\Psi^{\perp}},
\end{equation}
where $(\ket{0^m}\bra{0^m} \otimes I) \ket{\Psi^\perp} = 0$.
\end{lemma}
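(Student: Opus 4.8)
The plan is to prove the identity by direct computation, tracking the state through the three operators $O_\alpha \otimes I$, $\mathrm{select}(W)$, and $O_\alpha^\dagger \otimes I$ in turn, and then isolating the component of the final state whose ancilla register is $\ket{0^m}$.

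First I would apply $O_\alpha \otimes I$ to $\ket{0^m}\ket{\psi}$, using the defining property $O_\alpha\ket{0^m} = \sum_j \sqrt{\alpha_j}\ket{j}$ to obtain $\sum_j \sqrt{\alpha_j}\ket{j}\ket{\psi}$. Applying $\mathrm{select}(W) = \sum_k \ket{k}\bra{k}\otimes W_k$ then picks out the diagonal term and yields $\sum_j \sqrt{\alpha_j}\ket{j}\,W_j\ket{\psi}$. Finally, applying $O_\alpha^\dagger\otimes I$ gives $\sum_j \sqrt{\alpha_j}\,(O_\alpha^\dagger\ket{j})\,W_j\ket{\psi}$.

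The key step is to resolve $O_\alpha^\dagger\ket{j}$ into its component along $\ket{0^m}$ plus a remainder orthogonal to $\ket{0^m}$, writing $O_\alpha^\dagger\ket{j} = \bra{0^m}O_\alpha^\dagger\ket{j}\,\ket{0^m} + \ket{j^\perp}$ with $\braket{0^m | j^\perp} = 0$. Here $\bra{0^m}O_\alpha^\dagger\ket{j} = \overline{\bra{j}O_\alpha\ket{0^m}} = \overline{\sqrt{\alpha_j}} = \sqrt{\alpha_j}$, the last equality using the hypothesis $\alpha_j \ge 0$. Substituting back, the $\ket{0^m}$-part of the output is $\sum_j \sqrt{\alpha_j}\cdot\sqrt{\alpha_j}\,\ket{0^m}W_j\ket{\psi} = \ket{0^m}\bigl(\sum_j \alpha_j W_j\bigr)\ket{\psi} = \ket{0^m}M\ket{\psi}$, while the remaining terms form $\ket{\Psi^\perp} = \sum_j \sqrt{\alpha_j}\,\ket{j^\perp}\,W_j\ket{\psi}$, which is annihilated by $\ket{0^m}\bra{0^m}\otimes I$ by construction.

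This is essentially a bookkeeping exercise, so I do not anticipate a genuine obstacle; the only point requiring care is the use of the non-negativity and reality of the $\alpha_j$, so that $\sqrt{\alpha_j}$ is real and $\sqrt{\alpha_j}^{\,2} = \alpha_j$ reproduces the correct coefficients of $M$. It is also worth remarking that $O_\alpha$ need not be specified beyond its action on $\ket{0^m}$, and that the normalization $\sum_j \alpha_j = 1$ is exactly what guarantees $\sum_j \sqrt{\alpha_j}\ket{j}$ is a unit vector, so that a unitary $O_\alpha$ with the stated action exists.
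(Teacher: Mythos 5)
Your proof is correct, and it is essentially the same direct computation the paper uses to prove the generalized version (Lemma~\ref{thm:sum2}): apply $O_\alpha$, let $\mathrm{select}(W)$ pick out the diagonal terms, then project the ancilla onto $\ket{0^m}$ using $\bra{0^m}O_\alpha^\dagger\ket{j}=\sqrt{\alpha_j}$. (The paper does not prove Lemma~\ref{thm:sum} itself, merely restating it from Berry \emph{et al.}, but your argument matches the standard one and the paper's own proof of the generalization.)
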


\lref{thm:sum} can be directly applied to implement the circulant matrix $C$, as shown in \autoref{fig:circulant}. Since $\mathrm{select}(V)\ket{j}\ket{k} = \ket{j}\ket{(k-j) \mod N}$, it can be implemented using quantum adders~\cite{draper2000,maynard2013,maynard2013-a,pavlidis2014,cuccaro2004,draper2006}, which requires $O(\log^2N)$ one- or two-qubit gates. We assume for simplicity that $N = 2^{L}$, where $L$ is an integer.

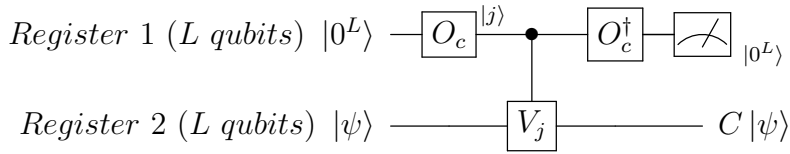
\begin{figure}[ht]
\[
\Qcircuit @C=1em @R=1.3em {
\lstick{Register~1~(L~qubits)~\ket{0^L}}	&	\gate{O_c}_{~~\qquad\qquad\ket{j}}	&	\ctrl{1}	&	\gate{O_c^\dagger}	&	\meter\qw^{~~\qquad \qquad \qquad \ket{0^L}}	\\
\lstick{Register~2~(L~qubits)~\ket{\psi}}	&	\qw	&	\gate{V_j}	&	\qw	&	\rstick{C\ket{\psi}}\qw	\\
}
\]
  \caption{Quantum circuit to implement a circulant matrix.}
  \label{fig:circulant}
\end{figure}

A measurement result of $\ket{0^L}$ in the first register generates the required state $C\ket{\psi}$ in the second register. The probability of this measurement outcome is $O(\norm{C\ket{\psi}}^2)$.
With the help of amplitude amplification~\cite{brassard2002quantum} this can be further improved, requiring only $O(1/\norm{C\ket{\psi}})$ rounds of application of $(O_c^\dagger \otimes I) \mathrm{select}(V) (O_c \otimes I)$.  The amplitude amplification procedure also requires the same number of applications of $O_{\psi}$,  where $O_{\psi}\ket{0^L} = \ket{\psi}$, and its inverse in order to reflect quantum states about the initial state $\ket{0^L}\ket{\psi}$.  If $O_\psi$ is unknown, amplitude amplification is not applicable and we will need to repeat the measuring process in \autoref{fig:circulant} $O(1/\norm{C\ket{\psi}}^2)$ times, during which $O(1/\norm{C\ket{\psi}}^2)$ copies of $\ket{\psi}$ are required.
It is worth noting that with the assumption $c_j \geq 0$, $C$ is unitary if and only if $C=V_j$.  In other words, a non-trivial circulant matrix is non-unitary and therefore, the oblivious amplitude amplification procedure~\cite{berry2014exponential} cannot be applied.

Provided with the oracle $O_c$ satisfying $O_c\ket{0^L} = \sum_{j=0}^{N-1} \sqrt{c_j}\ket{j}$, \autoref{thm:circulant} follows directly from the above discussions. $O_c$ can be efficiently implemented for certain efficiently-computable vectors $\gv{c}$~\cite{grover2002,kaye2004,soklakov2006}. Another way to construct states like $\sum_{j=0}^{N-1} \sqrt{c_j}\ket{j}$ is via qRAM, which uses $O(N)$ hardware resources but only $O(\log N)$ operations to access them~\cite{giovannetti2008,lloyd2013}.

\begin{theorem}[Implementation of Circulant Matrices]
\label{thm:circulant}
There exists an algorithm creating the quantum state $C\ket{\psi}$ for an arbitrary quantum state $\ket{\psi} = \sum_{k=0}^{N-1} \psi_k \ket{k}$, using $O(1/\norm{C\ket{\psi}})$ calls of $O_c$, $O_\psi$ and their inverses, as well as $O(\log^2 N/\norm{C\ket{\psi}})$ additional one- or two-qubit gates.
\end{theorem}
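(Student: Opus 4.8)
The plan is to assemble the ingredients already laid out in this section. First I would invoke \lref{thm:sum} with the identification $M=C$, $W_j=V_j$, and $\alpha_j=c_j$. The hypotheses are met because $c_j\geq 0$ by assumption, $\sum_{j=0}^{N-1}c_j=\norm{C}=1$, and the state-preparation oracle $O_\alpha$ is exactly $O_c$ (with $m=L$). The lemma then gives
\begin{equation}
(O_c^\dagger \otimes I)\,\mathrm{select}(V)\,(O_c \otimes I)\ket{0^L}\ket{\psi} = \ket{0^L}\,C\ket{\psi} + \ket{\Psi^\perp},
\end{equation}
with $(\ket{0^L}\bra{0^L}\otimes I)\ket{\Psi^\perp}=0$, so measuring Register~1 in the computational basis and post-selecting on the outcome $0^L$ leaves the normalised target state $C\ket{\psi}/\norm{C\ket{\psi}}$ in Register~2, with success probability $\norm{C\ket{\psi}}^2$. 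This is precisely the circuit of \autoref{fig:circulant}.

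Second I would bound the cost of a single application of $W:=(O_c^\dagger\otimes I)\,\mathrm{select}(V)\,(O_c\otimes I)$. It uses one call each to $O_c$ and $O_c^\dagger$, plus $\mathrm{select}(V)=\sum_j\ket{j}\bra{j}\otimes V_j$. Since $V_j=\sum_k\ket{(k-j)\bmod N}\bra{k}$, the action $\mathrm{select}(V)\ket{j}\ket{k}=\ket{j}\ket{(k-j)\bmod N}$ is a controlled modular subtraction of the contents of Register~1 from Register~2; with $N=2^L$ this is ordinary subtraction modulo $2^L$, which a standard quantum adder circuit performs using $O(L^2)=O(\log^2 N)$ one- or two-qubit gates. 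Hence one application of $W$ costs $O(1)$ oracle calls together with $O(\log^2 N)$ elementary gates.

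Third I would amplify the success probability using amplitude amplification~\cite{brassard2002quantum}. Taking the ``good'' subspace to be the range of $\ket{0^L}\bra{0^L}\otimes I$ and the preparation unitary to be $\mathcal{A}:=W\,(I\otimes O_\psi)$ acting on $\ket{0^L}\ket{0^L}$ — which prepares a state whose good component has amplitude $\norm{C\ket{\psi}}$ — amplitude amplification produces the target after $O(1/\norm{C\ket{\psi}})$ iterations, each invoking $\mathcal{A}$, $\mathcal{A}^\dagger$, a reflection about $\ket{0^L}\ket{0^L}$, and a reflection about the good subspace (i.e.\ a reflection about $\ket{0^L}$ on Register~1); these reflections cost only $O(\log N)$ gates each. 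When $\norm{C\ket{\psi}}$ is not known in advance one uses the standard exponential-search variant, changing the iteration count by at most a constant factor. Collecting terms, the $O(1/\norm{C\ket{\psi}})$ applications of $\mathcal{A}$ and $\mathcal{A}^\dagger$ consume $O(1/\norm{C\ket{\psi}})$ calls of $O_c$, $O_\psi$ and their inverses, and $O(\log^2 N/\norm{C\ket{\psi}})$ additional one- or two-qubit gates (the adder's $\log^2 N$ dominating the reflections' $\log N$), which is the claimed bound.

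There is no deep obstacle here — the theorem is essentially a packaging of the preceding discussion — but the point that requires care, and that I would flag explicitly, is the non-unitarity of $C$: every circulant matrix with $c_j\geq 0$ is unitary only in the trivial case $C=V_j$, so the \emph{oblivious} amplitude amplification of Ref.~\cite{berry2014exponential} is unavailable and the procedure must have access to $O_\psi$ in order to perform the reflection about the initial state $\ket{0^L}\ket{\psi}$. This is why $O_\psi$ and its inverse appear in the statement; absent $O_\psi$ one falls back to $O(1/\norm{C\ket{\psi}}^2)$ plain repetitions of the circuit, each consuming a fresh copy of $\ket{\psi}$.
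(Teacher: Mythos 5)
Your proposal is correct and follows essentially the same route as the paper: apply \lref{thm:sum} with $W_j=V_j$, $\alpha_j=c_j$, $O_\alpha=O_c$, implement $\mathrm{select}(V)$ as a controlled modular subtractor costing $O(\log^2 N)$ gates, and boost the success probability $\norm{C\ket{\psi}}^2$ via amplitude amplification using $O_\psi$ and its inverse for the reflection about the initial state. Your remarks on the non-unitarity of $C$, the consequent inapplicability of oblivious amplitude amplification, and the $O(1/\norm{C\ket{\psi}}^2)$ fallback without $O_\psi$ all match the paper's own discussion.
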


The complexity in \autoref{thm:circulant} is inversely proportional to $p=\norm{C\ket{\psi}}^2$, which depends on the quantum state to be acted upon. Specifically, $\norm{C\ket{\psi}}^2 = \bra{\psi}C^\dagger C\ket{\psi} = \bra{\psi}F \Lambda^\dagger F^\dagger F \Lambda F^\dagger\ket{\psi} = \bra{\psi}F \Lambda^\dagger\Lambda F^\dagger\ket{\psi}$.
Here we use the diagonalization form of $C$~\cite{golub2012}, $C = F \Lambda F^\dagger$, where $F$ is the Fourier matrix with $F_{kj} = e^{2\pi ijk/N}/\sqrt{N}$ and $\Lambda$ is a diagonal matrix of eigenvalues given by $\Lambda_{k} = \sum_{j=0}^{N-1}c_j e^{2\pi ijk/N}$.
Since the spectral norm $\norm{C}$ of the circulant matrix $C$ equals to one, we have $p=\bra{\psi}F \Lambda^\dagger\Lambda F^\dagger\ket{\psi} \geq 1/\kappa^2$, where $\kappa$ is the condition number, defined as the ratio between $C$'s largest and smallest (absolute value of) eigenvalues~\cite{harrow2009}. Therefore, our algorithm is bound to perform well when $\kappa = O(poly(\log N))$. In the ideal case where $\kappa=1$ and $p=1$, the vector $\gv{c}$ is a unit basis in which only one element equals to one and the others are zero.
Even when $\kappa$ is large, our algorithm is efficient when the input quantum state after Fourier Transform is in the subspace whose corresponding eigenvalues are large. To take an extreme but illustrative example, when $\Lambda_k = 1,\; k\neq N/2$ and $\Lambda_{N/2}=0$, we have $\kappa \rightarrow \infty$, but $p = \bra{\phi}\Lambda^\dagger\Lambda\ket{\phi} = \sum_{k=0}^{N-1}\abs{\Lambda_k}^2\abs{\phi_k}^2 \geq 1-\abs{\phi_{N/2}}^2$ in which $\ket{\phi} := F^\dagger \ket{\psi} = \sum_{k=0}^{N-1}\phi_k \ket{k}$. The success rate is therefore lower-bounded by $1-\abs{\phi_{N/2}}^2$, normally close to one.

\section{Circulant-like Matrices}
\label{sec:beyond}

\subsection{Block Circulant Matrices}
\label{subsec:block}

Some block circulant matrices with special structures can also be implemented efficiently in a similar fashion. We assume the blocks are $N'$-dimensional matrices and $L' = \log N'$ in the following discussions.

Firstly, when each block is a unitary operator up to a constant factor (i.e. $\mathbf{C}_j = c_j \mathbf{U}_j$), we have a unitary block (UB) matrix,
\begin{equation}
\label{eq:block}
\begin{split}
C_{UB} =
\begin{pmatrix}
\mathbf{C}_0 &\mathbf{C}_1 &\cdots &\mathbf{C}_{N-1} \\
\mathbf{C}_{N-1} &\mathbf{C}_0 &\cdots &\mathbf{C}_{N-2}\\
\vdots & \vdots & \ddots & \vdots\\
\mathbf{C}_1 & \mathbf{C}_2 & \cdots & \mathbf{C}_0
\end{pmatrix}
&=
\begin{pmatrix}
1 & 0 &\cdots & 0 \\
0 & 1 &\cdots & 0\\
\vdots & \vdots & \ddots & \vdots\\
0 & 0 & \cdots & 1
\end{pmatrix} \otimes \mathbf{C}_0
+
\begin{pmatrix}
0 & 1 &\cdots & 0 \\
0 & 0 &\cdots & \vdots\\
\vdots & \vdots & \ddots & 1\\
1 & 0 & \cdots & 0
\end{pmatrix} \otimes \mathbf{C}_1
+ \cdots\\
&= \sum_{j=0}^{N-1} V_j\otimes\mathbf{C}_j =\sum_{j=0}^{N-1} c_j V_j\otimes\mathbf{U}_j.
\end{split}
\end{equation}
If the set of blocks $\{\mathbf{U}_j\}_{j=0}^{N-1}$
can be efficiently implemented, then by simply replacing $\mathrm{select}(V) = \sum_{j=0}^{N-1}\ket{j}\bra{j}\otimes V_j$ with $\sum_{j=0}^{N-1}\ket{j}\bra{j}(V_j\otimes\mathbf{U}_j)$, we can efficiently implement the block circulant matrices $C_{UB}$ using the same algorithm discussed in \autoref{sec:circulant} as illustrated in \autoref{subfig:block}.

Specifically, when the set of blocks $\{\mathbf{U}_j\}_{j=0}^{N-1}$ are one-dimensional, we can implement complex-valued circulant matrices with efficiently computable phase. For example, for $\mathbf{U}_j = (e^{i \theta j}), ~j=0,1,\ldots,N-1$, circulant matrices with the parameter vector $\gv{c} = (c_0, e^{i\theta}c_1, \ldots,$\\$e^{i(N-1)\theta}c_{N-1}) $ can be implemented efficiently. Moreover, if $\theta = \pi$, $\gv{c} = (c_0, -c_1, \ldots,(-1)^{N-1}$\\$c_{N-1})$ corresponding to the circulant matrix with negative elements are on odd-numbered sites is efficiently-implementable.

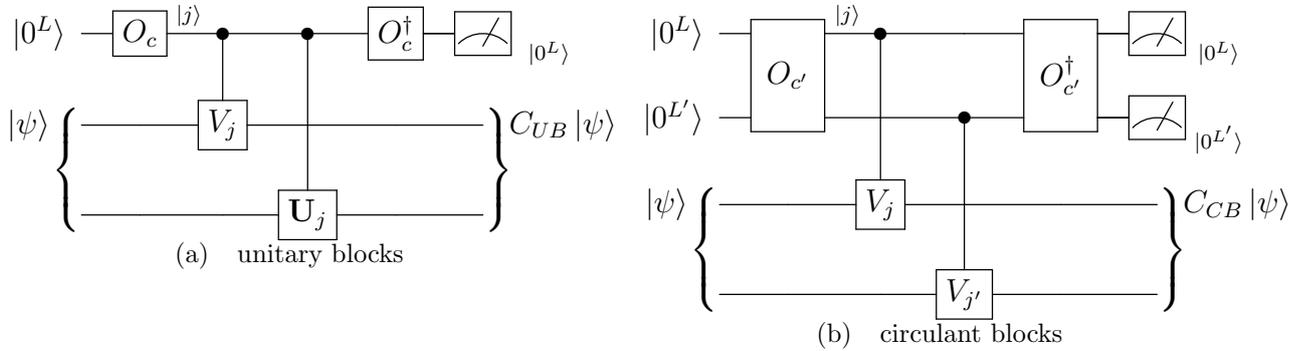
\begin{figure}[ht]
    \centering
    \subfigure[~~unitary blocks]
    { \label{subfig:block} 
    \Qcircuit @C=1em @R=1.3em {
\lstick{\ket{0^{L}}}	&	\gate{O_c}_{\quad\qquad\qquad\ket{j}}	&	\ctrl{1}	&	\ctrl{2}	&	\gate{O_c^\dagger}	&	\meter\qw^{\quad\qquad \qquad \qquad \ket{0^L}}	\\
\lstick{\ket{\psi}~}	&	\qw	&	\gate{V_j}	&	\qw	&	\qw	&	\rstick{~C_{UB}\ket{\psi}}\qw	\\
	&	\qw	&	\qw	&	\gate{\mathbf{U}_j}	&	\qw	&	\qw\gategroup{2}{1}{3}{1}{1em}{\{}\gategroup{2}{6}{3}{6}{1em}{\}}	\\
    }
    }
    $\quad$
    \subfigure[~~circulant blocks]
    { \label{subfig:BCCB} 
    \Qcircuit @C=1em @R=1.3em {
\lstick{\ket{0^{L}}}	&	\multigate{1}{O_{c'}}_{\qquad\qquad\qquad\ket{j}}	&	\ctrl{2}	&	\qw	&	\multigate{1}{O_{c'}^\dagger}	&	\meter\qw^{\quad\qquad \qquad \qquad \ket{0^L}}	\\				
\lstick{\ket{0^{L'}}}	&	\ghost{O_{c'}}	&	\qw	&	\ctrl{2}	&	\ghost{O_{c'}^\dagger}	&	\meter\qw^{\quad\qquad \qquad \qquad \ket{0^{L'}}}	\\				
\lstick{\ket{\psi}~}	&	\qw	&	\gate{V_j}	&	\qw	&	\qw	&	\rstick{~C_{CB}\ket{\psi}}\qw	\\				
	&	\qw	&	\qw	&	\gate{V_{j'}}	&	\qw	&	\qw\gategroup{3}{1}{4}{1}{1em}{\{}\gategroup{3}{6}{4}{6}{1em}{\}}	\\				
    }
    }
    \caption{The quantum circuit to implement block circulant matrices with special structures.}
\end{figure}

Another important family is block circulant matrices with circulant blocks (CB), which has found a wide range of applications in algorithms, mathematics, etc.~\cite{rjasanow1994effective,tee2005eigenvectors,combescure2009block,petrou2010image}. It is defined as follows
\begin{equation}
\label{eq:BCCB}
\begin{split}
C_{CB} =
\begin{pmatrix}
\mathbf{C}_0 &\mathbf{C}_1 &\cdots &\mathbf{C}_{N-1} \\
\mathbf{C}_{N-1} &\mathbf{C}_0 &\cdots &\mathbf{C}_{N-2}\\
\vdots & \vdots & \ddots & \vdots\\
\mathbf{C}_1 & \mathbf{C}_2 & \cdots & \mathbf{C}_0
\end{pmatrix},
\end{split}
\end{equation}
where $\mathbf{C}_j$ is a circulant matrix specified by a $N'$-dimensional vector $\gv{c}_j = (c_{j0}~c_{j1}~\cdots~c_{j(N'-1)})$. $C_{CB}$ is a $N\times N'$-dimensional matrix determined by $N\times N'$ parameters $\{c_{jj'}\}_{\substack{j=0,\ldots,N-1 \\ j'=0,\ldots,N'-1}}$. It can be decomposed as follows
\begin{equation}
\label{eq:BCCBdecomposition}
\begin{split}
C_{CB} = \sum_{j=0}^{N-1} \sum_{j'=0}^{N'-1} c_{jj'} V_{j} \otimes V_{j'}.
\end{split}
\end{equation}
Given an oracle $O_{c'}$ satisfying $O_{c'}\ket{0^{L+L'}} = \sum_{j=0}^{N-1}\sum_{j=0}^{N'-1} c_{jj'} \ket{j}\ket{j'}$, we can implement $C_{CB}$ using the quantum circuit shown in \autoref{subfig:BCCB}, which adopts a combination of two quantum subtractors.

\subsection{Toeplitz and Hankel Matrices}
\label{subsec:ToeplitzHankel}

A Toeplitz matrix is a matrix in which each descending diagonal from left to right is constant, which can be written explicitly as
\begin{equation}
T =
\begin{pmatrix}
t_{0} & t_{-1} & \cdots & t_{-(N-2)} & t_{-(N-1)} \\
t_1   & t_{0}  & \cdots & t_{-(N-3)} & t_{-(N-2)} \\
t_2   & t_1    & \cdots & t_{-(N-4)} & t_{-(N-3)} \\
\vdots &\vdots & \ddots &\vdots      &\vdots      \\
t_{N-1}&t_{N-2}& \cdots & t_1        & t_0        \\
\end{pmatrix},
\end{equation}
specified by $2N-1$ parameters. We focus on the situation where $t_j \geq 0$ for all $j$ as in \autoref{sec:circulant}. Clearly, when $t_{-(N-i)}=t_i$ for all $i$, $T$ is a circulant matrix. Although a Toeplitz matrix is not circulant in general, any Toeplitz matrix $T$ can be embedded in a circulant matrix~\cite{ng2004iterative,mahasinghe2016efficient}, defined by
\begin{equation}
C_T =
\begin{pmatrix}
T &B_T \\
B_T& T \\
\end{pmatrix},
\end{equation}
where $B_T$ is another Toeplitz matrix defined by
\begin{equation}
B_T =
\begin{pmatrix}
0 & t_{N-1} & \cdots & t_{2} & t_{1} \\
t_{-(N-1)}   & 0  & \cdots & t_{3} & t_{2} \\
t_{-(N-2)}   & t_{-(N-1)}    & \cdots & t_{4} & t_{3} \\
\vdots &\vdots & \ddots &\vdots      &\vdots      \\
t_{-1}&t_{-2}& \cdots & t_{-(N-1)}        & 0        \\
\end{pmatrix}.
\end{equation}
As a result, we use this embedding to implement Toeplitz matrices because
\begin{equation}
\begin{pmatrix}
T & B_T \\
B_T & T \\
\end{pmatrix}
\begin{pmatrix}
\psi\\
0\\
\end{pmatrix}
=
\begin{pmatrix}
T \psi\\
B_T \psi \\
\end{pmatrix} .
\end{equation}

Therefore, by implementing $C_T$, we obtain a quantum state proportional to $\ket{0}T\ket{\psi}+\ket{1}B_T\ket{\psi}$. Then we do a quantum measurement on the single qubit (in the second register in \autoref{fig:Toeplitz}) to obtain the quantum state $T\ket{\psi}$. The success rate is $\norm{T\ket{\psi}}^2$ according to \autoref{thm:circulant} under the normalization condition that $\sum_{j=-(N-1)}^{N-1}t_j = \sum_{j=0}^{N-1} c_j = 1$. With the help of amplitude amplification, only $O(1/\norm{T\ket{\psi}})$ applications of the circuit in \autoref{fig:Toeplitz} are required.

\begin{figure}[ht]
\[
\Qcircuit @C=1em @R=1.3em {
\lstick{Register~1~(L+1~qubits)~\ket{0^{L+1}}}	&	\gate{O_c}_{~~\qquad\qquad\ket{j}}	&	\ctrl{1}	&	\gate{O_c^\dagger}	&	\meter\qw^{~~\qquad \qquad \qquad \ket{0^L}}	\\
\lstick{Register~2~(1~qubit)~\ket{0}}	&	\qw	&	\multigate{1}{V_j}	&	\qw	&	\meter\qw^{\qquad \qquad \qquad \ket{0}}	\\
\lstick{Register~3~(L~qubits)~\ket{\psi}}	&	\qw	&	\ghost{V_j}	&	\qw	&	\rstick{T\ket{\psi}}\qw	\\
}
\]
  \caption{The quantum circuit to implement a Toeplitz matrix. In this figure, $O_c\ket{0^{L+1}} = \sum_{j=0}^{2N-1}c_j\ket{j}$ where $\gv{c} = (t_0~t_{-1}\cdots t_{-(N-1)}~0~t_{N-1}\cdots t_1) $.}
  \label{fig:Toeplitz}
\end{figure}
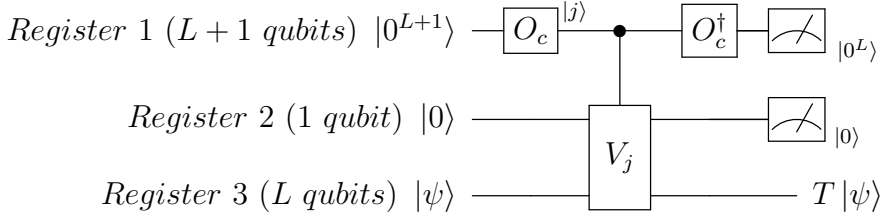

A Hankel matrix is a square matrix in which each ascending skew-diagonal from left to right is constant, which can be written explicitly as
\begin{equation}
H =
\begin{pmatrix}
h_{N-1} & h_{N-2} & \cdots & h_{1} & h_{0} \\
h_{N-2} & h_{N-3}  & \cdots & h_{0} & h_{-1} \\
\vdots &\vdots & \ddots &\vdots      &\vdots      \\
h_{1}  & h_{0}    & \cdots & h_{-(N-3)} & h_{-(N-2)} \\
h_{0}  &h_{-1}& \cdots & h_{-(N-2)}        & h_{-(N-1)}        \\
\end{pmatrix}.
\end{equation}
specified by $2N-1$ non-negative parameters. A permutation matrix $P = \sigma_x^{\otimes L}$ transforms a Hankel matrix into a Toeplitz matrix. It can be easily verified that $T=HP$ and $H=TP$, in which $t_j = h_j$ for all $j$.

Therefore by inserting the permutation $P$ before the implementation of $T$, the circuit in \autoref{fig:Toeplitz} can be used to implement $H$, and the success rate is $\norm{H\ket{\psi}}^2$ under the normalization condition that $ \sum_{j=-(N-1)}^{N-1}h_j = \sum_{j=0}^{N-1} c_j = 1$. With the help of amplitude amplification, only $O(1/\norm{H\ket{\psi}})$ applications are required.

In comparison with existing algorithms, such as that described in \cite{mahasinghe2016efficient}, 
the above described quantum circuit provides a better way to realize circulant-like matrices, requiring fewer resources and with lower complexity. For example, only $2 \log N$ qubits are required to implement $N$-dimensional Toeplitz matrices, which is a significant improvement over the algorithm presented in \cite{mahasinghe2016efficient} via sparse Hamiltonian simulations. More importantly, this is an exact method and its complexity does not depend on an error term. It is also not limited to sparse circulant matrices $C$ as in \cite{mahasinghe2016efficient}.  Moreover, implementation of non-unitary matrices, such as circulant matrices, is not only of importance in quantum computing, but also a significant ingredient in quantum channel simulators~\cite{lu2015universal,wang2015quantum}, because the set of Kraus operators in the quantum channel $\rho \mapsto \sum_{i} K_i \rho K_i^\dagger$ is normally non-unitary~\cite{nielsen2010quantum}. The simplicity of our circuit increases its feasibility in experimental realizations.

\section{Circulant Hamiltonians}
\label{sec:Hamiltonian}

Hamiltonian simulation is expected to be one of the most important undertakings for quantum computation.  It is therefore important to explore the possibility of efficient implementation of circulant Hamiltonians due to their extensive applications. Particularly, the implementation of $e^{-iCt}$ is equivalent to the implementation of continuous-time quantum walks on a weighted circulant graph~\cite{wang2013physical,childs2010relationship}. Moreover, simulation of Hamiltonians is also an important part in the HHL algorithm to solve linear systems of equations~\cite{harrow2009}.

A number of algorithms have been shown to be able to efficiently simulate sparse Hamiltonians~\cite{lloyd1996universal,berry2007efficient,childs2010simulating,wiebe2011simulating,poulin2011quantum,berry2015simulating,berry2015hamiltonian}, including the unitary decomposition approach~\cite{berry2015simulating}. We show that this approach can be extended to the simulation of dense circulant Hamiltonians. As we know, circulant matrices are diagonalizable and $e^{-iCt} = F e^{-i\Lambda t} F^\dagger$. Hence, there is a direct method to implement $e^{-iCt}$~\cite{qiang2016efficient} when its diagonal elements $\{\Lambda_k\}_{k=0}^{N-1}$ are already known. However, this method is generally not extensible when $\{c_j\}_{j=0}^{N-1}$ are inputs.


In this section, we will focus on the simulation of Hermitian circulant matrices, when $e^{-iCt}$ is unitary. For completeness, we first describe briefly the unitary decomposition approach and then discuss how it can be used to efficiently simulate dense circulant Hamiltonians. To simulate $U = e^{-iCt}$, we divide the evolution time $t$ into $r$ segments with $U_r = e^{-iCt/r}$, which can be approximated as $\tilde{U} = \sum_{k=0}^{K}1/k!(-iCt/r)^k$ with error $\epsilon$. It can be proven that if we choose $K = O\Big(\frac{\log(r/\epsilon)  }{\log\log(r/\epsilon) }\Big) = O\Big(\frac{\log(t/\epsilon)  }{\log\log(t/\epsilon) }\Big)$, then $\|U_r - \tilde{U}\| \leq \epsilon/r$ and the total error is within $\epsilon$.

Since $C = \sum_{j=0}^{N-1} c_jV_j$ as given by \autoref{eq:decomposition}, we have
\begin{equation}
\tilde{U} = \sum_{k=0}^{K}\frac{(-iCt/r)^k}{k!} = \sum_{k=0}^K \sum_{j_1,\ldots,j_k=0}^{N-1} \frac{(-it/r)^k}{k!}
c_{j_1}\cdots c_{j_k} V_{j_1}\cdots V_{j_k}.
\end{equation}
According to \lref{thm:sum}, let $W_{(k,j_1,\ldots,j_k)}=(-i)^kV_{j_1}\cdots V_{j_k}$ and
\begin{equation}
\label{eq:Oalpha}
O_\alpha \ket{0^{K+KL}} = \frac{1}{\sqrt{s}}\sum_{k=0}^{K} \sum_{j_1,\ldots,j_k=0}^{N-1} \sqrt{(t/r)^k/k! c_{j_1}\cdots c_{j_k}} \ket{1^{k}0^{K-k}}\ket{j_1}\cdots\ket{j_k}\ket{0^{(K-k)L}},
\end{equation}
where $\ket{1^{k}0^{K-k}}$ is the unary encoding of $k$. Here $s$ is the normalization coefficient and we choose $r = \lceil t/\ln 2 \rceil$ so that
\begin{equation}
s = \sum_{k=0}^{K}\sum_{j_1,\ldots,j_k=0}^{N-1} \frac{(t/r)^k}{k!} c_{j_1}\cdots c_{j_k}
= \sum_{k=0}^{K} \frac{\big((c_0+\cdots+c_{N-1})t/r\big)^k}{k!}  \approx 2.
\end{equation}
Then we have
\begin{equation}
(O_\alpha^\dagger \otimes I) \mathrm{select}(W) (O_\alpha \otimes I) \ket{0^{K+KL}}\ket{\psi} = \frac{1}{s} \ket{0^{K+KL}} \tilde{U} \ket{\psi} + \ket{\Psi^\perp},
\end{equation}
where $(\ket{0^{K+KL}}\bra{0^{K+KL}} \otimes I) \ket{\Psi^\perp} = 0$. It has been shown in Ref.~\cite{berry2015simulating} that after one step of oblivious amplitude amplification procedure~\cite{berry2014exponential}, $U_r = e^{-iCt/r}$ can be simulated within error $\epsilon/r$. The oblivious amplitude amplification procedure avoids the repeated preparations of $\ket{\psi}$ so that $\tilde{U}\ket{\psi}$ can be obtained using only one copy of $\ket{\psi}$. The total complexity depends on the number of gates required to implement $\mathrm{select}(W)$ and $O_\alpha$.

\begin{figure}
{\tiny
\[
\Qcircuit @C=1em @R=1.3em{
\lstick{\ket{0}_1}	&	\multigate{3}{R_{ini}}	&	\ctrl{4}	&	\qw	&	\qw	&	\qw	&	\qw	&	\qw	&	\qw	&	\qw	&	\qw	&	\qw	&	\qw	&	\qw	&	\gate{-i}	&	\qw	&	\qw	&	\qw	&	\qw	&	\qw	&	\ctrl{4}	&	\multigate{3}{R^\dagger_{ini}}	&	\meter\qw^{\quad\qquad \qquad \qquad \ket{0}}	\\
\lstick{\ket{0}_2}	&	\ghost{R_{ini}}	&	\qw	&	\ctrl{4}	&	\qw	&	\qw	&	\qw	&	\qw	&	\qw	&	\qw	&	\qw	&	\qw	&	\qw	&	\qw	&	\gate{-i}	&	\qw	&	\qw	&	\qw	&	\qw	&	\ctrl{4}	&	\qw	&	\ghost{R^\dagger_{ini}}	&	\meter\qw^{\quad\qquad \qquad \qquad \ket{0}}	\\
	&		&		&		&		&		&		&		&		&		&		&		&		&		&	\vdots	&		&		&		&		&		&		&		&		\\
\lstick{\ket{0}_K}	&	\ghost{R_{ini}}	&	\qw	&	\qw	&	\qw	&	\cdots	&	\quad	&	\ctrl{4}	&	\qw	&	\qw	&	\qw	&	\qw	&	\qw	&	\qw	&	\gate{-i}	&	\ctrl{4}	&	\qw	&	\cdots	&	\quad	&	\qw	&	\qw	&	\ghost{R^\dagger_{ini}}	&	\meter\qw^{\quad\qquad \qquad \qquad \ket{0}}	\\
\lstick{\ket{0^L}_1}	&	\qw	&	\gate{O_c}	&	\qw	&	\qw	&	\cdots	&	\quad	&	\qw	&	\ctrl{4}_{\quad\ket{j_1}}	&	\qw	&	\qw	&	\qw	&	\qw	&	\qw	&	\qw	&	\qw	&	\qw	&	\cdots	&	\quad	&	\qw	&	\gate{O^\dagger_c}	&	\qw	&	\meter\qw^{\quad\qquad \qquad \qquad \ket{0^L}}	\\
\lstick{\ket{0^L}_2}	&	\qw	&	\qw	&	\gate{O_c}	&	\qw	&	\cdots	&	\quad	&	\qw	&	\qw	&	\ctrl{3}_{\quad\ket{j_k}}	&	\qw	&	\cdots	&	\quad	&	\qw	&	\qw	&	\qw	&	\qw	&	\cdots	&	\quad	&	\gate{O^\dagger_c}	&	\qw	&	\qw	&	\meter\qw^{\quad\qquad \qquad \qquad \ket{0^L}}	\\
	&		&	\vdots	&		&		&		&		&		&		&		&		&		&		&		&		&		&		&		&		&		&	\vdots	&	\vdots	&		\\
\lstick{\ket{0^L}_K}	&	\qw	&	\qw	&	\qw	&	\qw	&	\cdots	&	\quad	&	\gate{O_c}	&	\qw	&	\qw	&	\qw	&	\cdots	&	\quad	&	\ctrl{1}_{\quad\ket{j_K}}	&	\qw	&	\gate{O^\dagger_c}	&	\qw	&	\cdots	&	\quad	&	\qw	&	\qw	&	\qw	&	\meter\qw^{\quad\qquad \qquad \qquad \ket{0^L}}	\\
\lstick{\ket{\psi}}	&	\qw	&	\qw	&	\qw	&	\qw	&	\qw	&	\qw	&	\qw	&	\gate{V_{j_1}}	&	\gate{V_{j_k}}	&	\qw	&	\cdots	&	\quad	&	\gate{V_{j_K}}	&	\qw	&	\qw	&	\qw	&	\qw	&	\qw	&	\qw	&	\qw	&	\qw\gategroup{1}{1}{9}{8}{1em}{_\}}\gategroup{1}{9}{9}{15}{1em}{_\}}\gategroup{1}{16}{9}{22}{1em}{_\}}	&	\rstick{e^{-iCt/r}\ket{\psi}}\qw	\\
	&		&	\mbox{$O_\alpha$}	&		&		&		&		&		&		&		&	\mbox{$\mathrm{select}(W)$}	&		&		&		&		&		&		&		&		&	\mbox{$O_\alpha^\dagger$}	&		&		&		
}
\]
}
\caption{The quantum circuit to implement one segment of circuilant Hamiltonians. Here $\boxed{R_{ini}}\ket{0^K} = \sum_{k=0}^{K} \sqrt{(t/r)^k/k!} \ket{1^k 0^{K-k}}$ and $\boxed{-i}=\ket{0}\bra{0}+(-i)\ket{1}\bra{1}$.}
\label{fig:Hamiltonian}
\end{figure}

\begin{theorem}[Simulation of Circulant Hamiltonians]
\label{thm:Hamiltonian}
There exists an algorithm performing $e^{-iCt}$ on an arbitrary quantum state $\ket{\psi}$ within error $\epsilon$, using $O\Big(t\frac{\log(t/\epsilon)}{\log\log(t/\epsilon)}\Big)$ calls of controlled-$O_c$~\footnote{By controlled-$O_c$, we mean the operation $\ket{0}\bra{0}\otimes I + \ket{1}\bra{1}\otimes O_c$} and its inverse, as well as $O\Big(t(\log N)^2\frac{\log(t/\epsilon)}{\log\log(t/\epsilon)}\Big)$ additional one- and two-qubit gates.
\end{theorem}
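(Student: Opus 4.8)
The plan is to follow the truncated–Taylor–series / unitary–decomposition strategy of Berry \emph{et al.}~\cite{berry2015simulating}, sketched in the discussion preceding the theorem, and to carefully account for the resources consumed at each stage. First I would fix the segmentation: write $e^{-iCt} = \big(e^{-iCt/r}\big)^r$ with $U_r := e^{-iCt/r}$ and choose $r = \lceil t/\ln 2\rceil$, so that $\norm{Ct/r} \le \norm{C}\,t/r \le \ln 2 < 1$ using the normalization $\norm{C}=1$. With this choice the truncated exponential $\tilde U = \sum_{k=0}^K (-iCt/r)^k/k!$ obeys $\norm{U_r - \tilde U} \le 2\,(\norm{Ct/r})^{K+1}/(K+1)! = O\big(1/(K+1)!\big)$, which is at most $\epsilon/r$ once $K = O\big(\log(r/\epsilon)/\log\log(r/\epsilon)\big)$ by Stirling's formula; since $r = \Theta(t)$ this is $K = O\big(\log(t/\epsilon)/\log\log(t/\epsilon)\big)$, and summing over the $r$ segments keeps the total error at $O(\epsilon)$.

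Next I would rewrite one segment as a non-negative linear combination of unitaries. Substituting $C = \sum_j c_j V_j$ from \autoref{eq:decomposition} gives $\tilde U = \sum_{k=0}^K \sum_{j_1,\ldots,j_k} \frac{(t/r)^k}{k!}\, c_{j_1}\cdots c_{j_k}\, W_{(k,j_1,\ldots,j_k)}$ with $W_{(k,j_1,\ldots,j_k)} = (-i)^k V_{j_1}\cdots V_{j_k}$ unitary. The coefficient sum is $s = \sum_{k=0}^K \big((c_0+\cdots+c_{N-1})\,t/r\big)^k/k! \le e^{t/r} = e^{\ln 2} = 2$ and differs from $2$ by at most $\epsilon/r$ for the chosen $K$. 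Applying \lref{thm:sum} with the operator $O_\alpha$ of \autoref{eq:Oalpha} — which uses the unary encoding $\ket{1^k 0^{K-k}}$ of the order $k$ — produces $\tfrac{1}{s}\ket{0^{K+KL}}\tilde U\ket{\psi}$ in the flagged subspace. Since $U_r$ is unitary and $\norm{\tilde U - U_r} \le \epsilon/r$, the operator $\tilde U$ is $(\epsilon/r)$-close to unitary, and because $s \approx 2$ a single round of the robust oblivious amplitude amplification of Ref.~\cite{berry2014exponential} outputs $U_r\ket{\psi}$ within error $O(\epsilon/r)$ using only one copy of $\ket{\psi}$; this is precisely the step carried out in Ref.~\cite{berry2015simulating}.

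Finally I would count gates per segment following \autoref{fig:Hamiltonian}. The operator $O_\alpha$ splits into $R_{ini}$ on the $K$ flag qubits — a cascade of $O(K)$ controlled rotations preparing $\sum_k \sqrt{(t/r)^k/k!}\,\ket{1^k 0^{K-k}}$ — followed by one copy of $O_c$ on each of the $K$ index registers, each controlled on its flag qubit, i.e. $K$ calls of controlled-$O_c$. The operator $\mathrm{select}(W)$ is a product over the $K$ index registers of a controlled-$\mathrm{select}(V)$ and a controlled phase $\ket{0}\bra{0} + (-i)\ket{1}\bra{1}$; since $\mathrm{select}(V)\ket{j}\ket{k} = \ket{j}\ket{(k-j)\bmod N}$ is a quantum adder costing $O(\log^2 N)$ one- and two-qubit gates, $\mathrm{select}(W)$ costs $O(K\log^2 N)$ gates. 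Oblivious amplitude amplification only multiplies this per-segment cost by a constant, so one segment needs $O(K)$ calls of controlled-$O_c$ and its inverse plus $O(K\log^2 N)$ additional one- and two-qubit gates; multiplying by the $r = O(t)$ segments yields $O(tK) = O\big(t\,\frac{\log(t/\epsilon)}{\log\log(t/\epsilon)}\big)$ calls of controlled-$O_c$ and $O\big(t\,(\log N)^2\,\frac{\log(t/\epsilon)}{\log\log(t/\epsilon)}\big)$ additional gates, as claimed.

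I expect the main obstacle to be the error analysis underpinning the amplification step: checking that $\tilde U$ being $(\epsilon/r)$-close to the unitary $U_r$ together with $s$ being $(\epsilon/r)$-close to $2$ is enough for one round of robust oblivious amplitude amplification~\cite{berry2014exponential} to return $U_r\ket{\psi}$ up to error $O(\epsilon/r)$, and that these per-segment errors compose with the Taylor truncation to stay within $O(\epsilon)$ across all $r$ segments. Everything else — the segmentation bound, the LCU rewriting, and the gate count — is routine, the circulant structure entering only through the $O(\log^2 N)$ quantum-adder implementation of $\mathrm{select}(V)$.
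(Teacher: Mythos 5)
Your proposal follows exactly the same route as the paper's proof: the same segmentation $r=\lceil t/\ln 2\rceil$ with $K=O(\log(t/\epsilon)/\log\log(t/\epsilon))$, the same LCU form with $W_{(k,j_1,\ldots,j_k)}=(-i)^kV_{j_1}\cdots V_{j_k}$ and $s\approx 2$ enabling one round of oblivious amplitude amplification, and the same per-segment gate count ($O(K)$ controlled-$O_c$ calls for $O_\alpha$, $O(K\log^2 N)$ gates for the quantum subtractors in $\mathrm{select}(W)$). Your added detail on the truncation and robust-amplification error analysis is a welcome supplement to what the paper delegates to Refs.~\cite{berry2015simulating,berry2014exponential}, but it is not a different argument.
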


\begin{proof}
We first consider the number of gates used to implement $O_\alpha$ in \autoref{eq:Oalpha}. It can be decomposed into two steps. The first step is to create the normalized version of the state $\sum_{k=0}^{K} \sqrt{(t/r)^k/k!} \ket{1^k 0^{K-k}}$ from the initial state $\ket{0^K}$, which takes $O(K)$ consecutive one-qubit rotations on each qubit. We then apply $K$ sets of controlled-$O_c$ to transform $\ket{0^L}$ into $\sum_{j=0}^{N-1}\sqrt{c_j}\ket{j}$ when the control qubit is $\ket{1}$. We therefore need $O(K)$ calls of controlled-$O_c$ and $O(K)$ additional one-qubit gates to implement $O_\alpha$.

Next we focus on the implementation of
\begin{multline}
\mathrm{select}(W) = \sum_{(k,j_1,j_2,\cdots,j_k)}\\ \ket{1^{k}0^{K-k}}\ket{j_1}\cdots\ket{j_k}\ket{0^{(K-k)L}} \bra{1^{k}0^{K-k}}\bra{j_1}\cdots\bra{j_k}\bra{0^{(K-k)L}}\otimes (-i)^{k}V_{j_1}\cdots V_{j_k},
\end{multline}
which performs the transformation
\begin{multline}
\label{eq:specific}
\ket{1^{k}0^{K-k}}\ket{j_1}\cdots\ket{j_k}\ket{0^{(K-k)L}}\ket{\psi} \xrightarrow{\text{select}(W)}\\
\ket{1^{k}0^{K-k}}\ket{j_1}\cdots\ket{j_k}\ket{0^{(K-k)L}}(-i)^k V_{j_1}\cdots V_{j_k}\ket{\psi}.
\end{multline}
As $V_j\ket{\ell} = \ket{(\ell-j) \mod N}$, we can transform $\ket{\psi}$ into $V_{j_1}\cdots V_{j_k}\ket{\psi}$ by applying $K$ quantum subtractors between $\ket{j}_{j = j_1,j_2,\ldots,j_k,0,\ldots}$ and $\ket{\psi}$. $K$ phase gates on each of the first $K$ qubits multiply the amplitude by $(-i)^k$. Therefore $\mathrm{select}(W)$ can be decomposed into $O(K \log^2 N)$ numbers of one or two-qubit gates.

In summary, $O(K)$ calls of controlled-$O_c$ and its inverse as well as $O(K \log^2N)$ additional one-qubit gates are sufficient to implement one segment $e^{-iCt/r}$. And the total complexity to implement $r$ segments will be
$O(tK)$ calls of controlled-$O_c$ and its inverse as well as $O(tK \log^2N)$ additional one-qubit gates, where $K = O\Big(\frac{\log(t/\epsilon)  }{\log\log(t/\epsilon) }\Big)$.
\end{proof}

Note that we assumed the spectral norm $\norm{C}=1$. To explicitly put it in the complexity in \autoref{thm:Hamiltonian}, we can simply replace $t$ by $\norm{C}t$.

\section{Inverse of Circulant Matrices}
\label{sec:hhl}

We now show that the HHL algorithm can be extended to solve systems of circulant matrix linear equations.

\begin{theorem}[Inverse of Circulant Matrices]
\label{thm:hhl}
There exists an algorithm creating the quantum state $C^{-1}\ket{\psi}/\norm{C^{-1}\ket{\psi}}$ within error $\epsilon$ given an arbitrary quantum state $\ket{\psi}$, using $\tilde{O}(\kappa^2/\epsilon)$~\footnote{We use the symbol $\tilde{O}$ to suppress polylogarithmic factors.}
calls of controlled-$O_c$ and its inverse, $O(\kappa)$ calls of $O_\psi$, as well as $\tilde{O}(\kappa^2\log^2 N/\epsilon)$ additional one- and two-qubit gates.
\end{theorem}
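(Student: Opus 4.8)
The plan is to prove the theorem by running the HHL algorithm~\cite{harrow2009} with the circulant Hamiltonian simulation of \autoref{thm:Hamiltonian} supplying the phase-estimation subroutine. The first issue to settle is that a generic circulant matrix $C$, even with $c_j\ge 0$, is not Hermitian, so $e^{-iCt}$ is not unitary and phase estimation does not apply to $C$ directly. I would resolve this in the standard way, by passing to the Hermitian dilation
\begin{equation}
\tilde C \;=\;
\begin{pmatrix} 0 & C \\ C^{\dagger} & 0 \end{pmatrix},
\end{equation}
acting on one extra qubit, whose non-zero eigenvalues are $\pm\sigma_k$, the singular values of $C$; hence $\norm{\tilde C}=\norm{C}=1$ and $\kappa(\tilde C)=\kappa(C)=\kappa$. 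The key observation is that $\tilde C$ keeps exactly the circulant-like structure that \autoref{thm:Hamiltonian} exploits: setting $W_j := \ket{0}\bra{1}\otimes V_j + \ket{1}\bra{0}\otimes V_j^{\dagger}$, one has $\tilde C = \sum_{j=0}^{N-1} c_j W_j$ with $\sum_j c_j = 1 = \norm{\tilde C}$, and each $W_j$ is a unitary (being Hermitian and squaring to the identity) that is realised by $O(\log^2 N)$ one- and two-qubit gates, just like $V_j$. Therefore the Taylor-series / unitary-decomposition construction in the proof of \autoref{thm:Hamiltonian} carries over verbatim with $\mathrm{select}(V)$ replaced by $\mathrm{select}(W)$ and the \emph{same} oracle $O_c$, yielding a simulation of $e^{-i\tilde C t}$ within any error $\epsilon'$ using $\tilde O(t)$ controlled-$O_c$ calls and $\tilde O(t\log^2 N)$ additional gates.

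Next I would feed the enlarged input $\ket{0}\otimes\ket{\psi}$ into the HHL procedure for $\tilde C$. Because $\tilde C^{-1}(\ket{0}\otimes\ket{\psi}) = \ket{1}\otimes C^{-1}\ket{\psi}$, the target state appears directly in the $\ket{1}$ block and no extra post-selection is needed. The pipeline is then the usual one: prepare $\ket{0}\ket{\psi}$ with $O_\psi$; run phase estimation on $e^{-i\tilde C t_0}$ with $t_0 = O(\kappa/\epsilon)$, which resolves each eigenvalue $\lambda$ of $\tilde C$ to absolute precision $O(\epsilon/\kappa)$ and, by \autoref{thm:Hamiltonian}, costs $\tilde O(\kappa/\epsilon)$ controlled-$O_c$ calls and $\tilde O(\kappa\log^2 N/\epsilon)$ gates; apply the controlled ancilla rotation through an angle $\propto 1/\lambda$, which is well defined since $\abs{\lambda}\ge 1/\kappa$ on the relevant subspace; uncompute the phase estimation; and measure the ancilla. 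A $\mathrm{polylog}(\kappa/\epsilon)$ overhead for the phase-estimation transform and the reversible arcsine rotation is absorbed into $\tilde O$. The good-branch amplitude is $\Theta(\norm{C^{-1}\ket{\psi}}/\kappa)$, and since $\norm{C^{-1}\ket{\psi}}\ge 1/\norm{C} = 1$ the success probability is $\Omega(1/\kappa^2)$.

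The final step is amplitude amplification~\cite{brassard2002quantum}: $O(\kappa)$ rounds raise the success probability to a constant, and each round uses one reflection about $\ket{0}\ket{\psi}$ --- i.e.\ $O(1)$ calls of $O_\psi$ and $O_\psi^{\dagger}$ --- together with one more run of the HHL circuit and its inverse. Multiplying the per-round costs by $O(\kappa)$ gives $\tilde O(\kappa^2/\epsilon)$ calls of controlled-$O_c$ and its inverse, $O(\kappa)$ calls of $O_\psi$, and $\tilde O(\kappa^2\log^2 N/\epsilon)$ additional one- and two-qubit gates, which is precisely the claimed complexity.

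I do not expect a genuine conceptual obstacle here --- it is HHL with \autoref{thm:Hamiltonian} substituted for the sparse-Hamiltonian oracle --- so the real work is careful bookkeeping. The two points I would check most carefully are (i) that the Hermitian dilation simultaneously preserves efficient simulability and leaves $\kappa$ unchanged, as sketched above, and (ii) that the error budget $\epsilon$ can be split among the Hamiltonian-simulation error $\epsilon'$, the phase-estimation precision, and the amplitude-amplification truncation so that the three add up to $\epsilon$; since each of these affects only polylogarithmic factors or is already accounted for by the choice $t_0 = O(\kappa/\epsilon)$, they are all swallowed by the $\tilde O$. It is worth noting that the $1/\epsilon$ scaling --- rather than $\mathrm{polylog}(1/\epsilon)$ --- is inherited from the phase-estimation step and is the familiar HHL bottleneck.
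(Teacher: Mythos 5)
Your proposal is correct and follows the same route as the paper: run the HHL pipeline with the Taylor-series circulant-Hamiltonian simulation of \autoref{thm:Hamiltonian} supplying the phase-estimation unitary, take phase-estimation precision $O(\kappa/\epsilon)$ as the dominant error source, note the post-selection success probability $\Omega(1/\kappa^2)$, and use $O(\kappa)$ rounds of amplitude amplification to arrive at $\tilde O(\kappa^2/\epsilon)$ controlled-$O_c$ calls, $O(\kappa)$ calls of $O_\psi$, and $\tilde O(\kappa^2\log^2 N/\epsilon)$ additional gates. The one substantive difference is your treatment of non-Hermitian $C$: the paper's proof implicitly assumes $C$ is Hermitian (consistent with the restriction stated in the Hamiltonian-simulation section, and with its writing the eigenvalues as real numbers $\Lambda_j\le 1$), whereas you pass to the Hermitian dilation $\tilde C=\bigl(\begin{smallmatrix}0&C\\ C^\dagger&0\end{smallmatrix}\bigr)$ and observe --- correctly, and this is the nontrivial point --- that the dilation preserves the linear-combination-of-unitaries structure with the \emph{same} coefficients $c_j$ and the same oracle $O_c$, since $W_j=\ket{0}\bra{1}\otimes V_j+\ket{1}\bra{0}\otimes V_j^\dagger$ is unitary and as cheap as $V_j$; together with normality of circulants (so eigenvalue moduli equal singular values and $\kappa$ is unchanged), this extends the theorem to all circulant $C$ covered by the paper's assumptions at no asymptotic cost. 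Your accounting of the amplitude bound via $\norm{C^{-1}\ket{\psi}}\ge 1/\norm{C}=1$ and of the per-round costs is consistent with the claimed complexities.
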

\begin{proof}
The procedure of the HHL algorithm works as follows~\cite{harrow2009}:
\begin{enumerate}
\item Apply the oracle $O_\psi$ to create the input quantum state $\ket{\psi}$:
\[\ket{0^L}\xrightarrow{O_\psi}\ket{\psi} = \sum_{j=0}^{N-1}b_j\ket{u_j},\]
where $\{\ket{u_j}\}_{j=0}^{N-1}$ are the eigenvectors of $C$.
\item Run phase estimation of the unitary operator $e^{i2\pi C}$:
\[\sum_{j=0}^{N-1}b_j\ket{u_j}\rightarrow\sum_{j=0}^{N-1}b_j\ket{u_j}\ket{\Lambda_j},\]
where $\Lambda_j$ are the eigenvalues of $C$ and $\Lambda_j \leq 1$.
\label{item:estimation}
\item Perform a controlled-rotation on an ancillary qubit:
\[\sum_{j=0}^{N-1}b_j\ket{u_j}\ket{\Lambda_j}\rightarrow\sum_{j=0}^{N-1}b_j\ket{u_j}\ket{\Lambda_j}\Big(1/(\kappa\Lambda_j) \ket{1} + \sqrt{1-1/(\kappa^2\Lambda_j^2)}\Big),\]
where $\kappa$ is the condition number defined in \autoref{sec:circulant} to make sure that $1/(\kappa\Lambda_j)\leq 1$ for all $j$.
\item Undo the phase estimation and then measure the ancillary qubit. Conditioned on getting $1$, we have an output state $\propto \sum_{j=0}^{N-1}b_j/\Lambda_j\ket{u_j}$ and the success rate $p = \sum_{j=0}^{N-1}\abs{b_j/\kappa\Lambda_j}^2 = \Omega(1/\kappa^2)$.
\end{enumerate}
Error occurs in Step~\ref{item:estimation} in Hamiltonian simulation and phase estimation. The complexity scales sublogarithmically with the inverse of error in Hamiltonian simulation as in \autoref{thm:Hamiltonian} and scales linearly with it in phase estimation~ \cite{nielsen2010quantum}. The dominant source of error is phase estimation. Following from the error analysis in Ref.~\cite{harrow2009}, a precision $O(\kappa/\epsilon)$ in phase estimation results in a final error $\epsilon$. Taking the success rate $p=\Omega(1/\kappa^2)$ into consideration, the total complexity would be $\tilde{O}(\kappa^2/\epsilon)$, with the help of amplitude amplification~\cite{brassard2002quantum}.
\end{proof}

\section{Products of Circulant Matrices}
\label{sec:product}

Products of circulant matrices are also circulant matrices, because a circulant matrix can be decomposed into a linear combination of $\{V_j\}_{j=0}^{N-1}$ that constitute a cyclic group of order $N$ (we have $V_j V_k = V_{(j+k) \mod N}$).
Suppose $C^{(1,2)} = C^{(1)}C^{(2)}$ is the product of two circulant matrices $C^{(1)}$ and $C^{(2)}$ which have a parameter vector $\gv{c}^{(1,2)}$, where
\begin{equation}
\label{eq:product}
c^{(1,2)}_j = \sum_{\substack{j_1,j_2 \\ j_1+j_2 \equiv j \mod N}} c^{(1)}_{j_1}c^{(2)}_{j_2},
\end{equation}
where $\gv{c}^{(1)}$ and $\gv{c}^{(2)}$ are each the parameters of $C^{(1)}$ and $C^{(2)}$.
Clearly, when the spectral norm of $C^{(1)}$ and $C^{(2)}$ are one, the spectral norm of $C^{(1,2)}$ is also one. Classically, to calculate the parameters $\gv{c}^{(1,2)}$ would take up $O(N)$ space. However, in the quantum case, we will show that $O_{c^{(1,2)}}$, encoding $\gv{c}^{(1,2)}$, can be prepared using one $O_{c^{(1)}}$ and one $O_{c^{(2)}}$. It means that the oracle for a product of circulant matrices can be efficiently prepared when its factor circulants are efficiently implementable, as illustrated in \autoref{fig:product}.

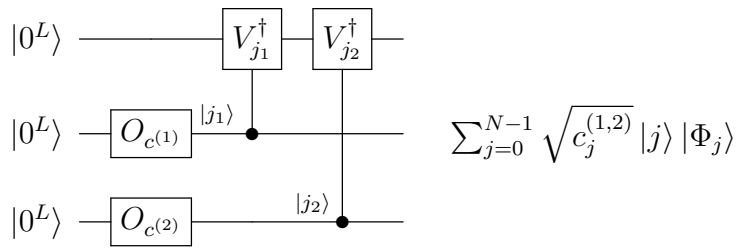
\begin{figure}[ht]
\[
\Qcircuit @C=1em @R=1.3em{
\lstick{\ket{0^L}}	&	\qw	&	\gate{V_{j_1}^\dagger}	&	\gate{V_{j_2}^\dagger}	&	\qw&	\\
\lstick{\ket{0^L}}	&	\gate{O_{c^{(1)}}}	&	\ctrl{-1}_{\quad\ket{j_1}}	&	\qw	&	\qw&\rstick{\sum_{j=0}^{N-1}\sqrt{c_j^{(1,2)}}\ket{j}\ket{\Phi_j}}\\
\lstick{\ket{0^L}}	&	\gate{O_{c^{(2)}}}	&	\qw	&	\ctrl{-2}_{\quad\ket{j_2}}	&	\qw&	\\
}
\]
\caption{The quantum circuit of $O_{c^{(1,2)}}$. Here $V_j^\dagger = \sum_{k=0}^{N-1}\ket{(k+j) \mod N}\bra{k}$ and controlled-$V_j^\dagger$ is a quantum adder.}
\label{fig:product}
\end{figure}

\begin{theorem}[Products of Circulant Matrices]
\label{thm:product}
There exists an algorithm creating the oracle $O_{c^{(1,2)}}$, which satisfies
\begin{equation}
\label{eq:prodoracle}
O_{c^{(1,2)}} \ket{0^{3L}} =  \sum_{j=0}^{N-1} \sqrt{c^{(1,2)}_j} \ket{j}\ket{\Phi_j},
\end{equation}
where $\ket{\Phi_j}$ is a unit quantum state dependent on $j$, using one $O_{c^{(1)}}$, one $O_{c^{(2)}}$ and $O(\log^2 N)$ additional one- and two-qubit gates.
\end{theorem}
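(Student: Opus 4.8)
The plan is to verify directly that the circuit in \autoref{fig:product} realizes the claimed oracle, and then read off the gate count. First I would track the state. Applying $O_{c^{(1)}}$ to the second register and $O_{c^{(2)}}$ to the third register, leaving the first register in $\ket{0^L}$, produces the product state
\begin{equation}
\ket{0^L}\otimes\Big(\sum_{j_1=0}^{N-1}\sqrt{c^{(1)}_{j_1}}\ket{j_1}\Big)\otimes\Big(\sum_{j_2=0}^{N-1}\sqrt{c^{(2)}_{j_2}}\ket{j_2}\Big) = \sum_{j_1,j_2=0}^{N-1}\sqrt{c^{(1)}_{j_1}c^{(2)}_{j_2}}\,\ket{0^L}\ket{j_1}\ket{j_2}.
\end{equation}
Since $V_j^\dagger=\sum_k\ket{(k+j)\bmod N}\bra{k}$, the first controlled adder (controlled on register $2$) sends $\ket{0^L}\ket{j_1}\mapsto\ket{j_1}\ket{j_1}$, and the second (controlled on register $3$) then sends $\ket{j_1}\ket{j_2}\mapsto\ket{(j_1+j_2)\bmod N}\ket{j_2}$, so the circuit output is $\sum_{j_1,j_2}\sqrt{c^{(1)}_{j_1}c^{(2)}_{j_2}}\,\ket{(j_1+j_2)\bmod N}\ket{j_1}\ket{j_2}$.

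Next I would regroup this sum by the value $j=(j_1+j_2)\bmod N$ carried in the first register. For each $j$ with $c^{(1,2)}_j>0$, define
\begin{equation}
\ket{\Phi_j}:=\frac{1}{\sqrt{c^{(1,2)}_j}}\sum_{\substack{j_1,j_2 \\ j_1+j_2\equiv j \bmod N}}\sqrt{c^{(1)}_{j_1}c^{(2)}_{j_2}}\,\ket{j_1}\ket{j_2},
\end{equation}
and for $j$ with $c^{(1,2)}_j=0$ let $\ket{\Phi_j}$ be any fixed unit vector (the corresponding amplitude vanishes since all $c^{(1)}_{j_1},c^{(2)}_{j_2}\ge 0$, so the choice is immaterial). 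By \autoref{eq:product}, $\braket{\Phi_j|\Phi_j}=\frac{1}{c^{(1,2)}_j}\sum_{j_1+j_2\equiv j}c^{(1)}_{j_1}c^{(2)}_{j_2}=1$, so $\ket{\Phi_j}$ is a legitimate unit state and the circuit output equals $\sum_{j=0}^{N-1}\sqrt{c^{(1,2)}_j}\ket{j}\ket{\Phi_j}$, which is exactly \autoref{eq:prodoracle}. The resource count is then immediate: one call each to $O_{c^{(1)}}$ and $O_{c^{(2)}}$, plus two controlled modular adders, each costing $O(\log^2 N)$ one- and two-qubit gates just as in \autoref{sec:circulant}.

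I expect no real obstacle in the argument itself; the only delicate point, which I would flag in a short remark rather than in the proof proper, is that $O_{c^{(1,2)}}$ is a \emph{dirty} state-preparation oracle — it appends the $j$-dependent register $\ket{\Phi_j}$ instead of producing the clean state $\sum_j\sqrt{c^{(1,2)}_j}\ket{j}$ required verbatim by \lref{thm:sum}. This does no harm because the index $j$ is still written explicitly in the first register, so $\mathrm{select}(V)$ can read it off and the junk registers are uncomputed coherently by $O_{c^{(1,2)}}^\dagger$: in $(O_{c^{(1,2)}}^\dagger\otimes I)\,\mathrm{select}(V)\,(O_{c^{(1,2)}}\otimes I)$ the amplitude surviving on $\ket{0^{3L}}$ is $\sum_j c^{(1,2)}_j\braket{\Phi_j|\Phi_j}V_j\ket{\psi}=\sum_j c^{(1,2)}_j V_j\ket{\psi}=C^{(1,2)}\ket{\psi}$, precisely what \lref{thm:sum} would yield with a clean oracle. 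Beyond this bookkeeping the proof is just the circuit trace above.
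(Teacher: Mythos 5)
Your proposal is correct and follows essentially the same route as the paper: trace the circuit of \autoref{fig:product} (two state-preparation oracles followed by two controlled modular adders writing $(j_1+j_2)\bmod N$ into the first register), identify the garbage register as the normalized $\ket{\Phi_j}$, and count $O(\log^2 N)$ gates for the adders. Your closing remark about the ``dirty'' oracle is exactly what the paper factors out into \lref{thm:sum2}, so nothing is missing.
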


\begin{proof}
We need $2L$ ancillary qubits divided into two registers to construct the oracle for the product of two circulant matrices.
We start by applying $O_{c^{(1)}}$ and $O_{c^{(2)}}$ on the last $2$ registers, we obtain
\begin{equation}
\ket{0^{3L}} \rightarrow \ket{0^L}\big(\sum_{j_1=0}^{N-1}\sqrt{c^{(1)}_{j_1}}\ket{j_1}\big)\big(\sum_{j_2=0}^{N-1}\sqrt{c^{(2)}_{j_2}}\ket{j_2}\big).
\end{equation}
In order to encode $c^{(1,2)}_j$ in the quantum amplitudes, we once again apply quantum adders to achieve our goals. By performing the following transformation
\begin{equation}
\ket{0}\ket{j_1}\ket{j_2} \rightarrow \ket{j}\ket{j_1}\ket{j_2},
\end{equation}
where $j \equiv (j_1+j_2) \mod N$.  This can be achieved using two quantum adders, we obtain the state
\begin{equation}
\sum_{j=0}^{N-1}\sqrt{c^{(1,2)}_j}\ket{j}\ket{\Phi_j},
\end{equation}
because the amplitude of $\ket{j}$ equals to $\sqrt{\sum_{\substack{j_1,j_2 \\ j_1+j_2 \equiv j \mod N}} \big(\sqrt{c_{j_1}c_{j_2}}\big)^2} = \sqrt{c^{(1,2)}_j}$.
\end{proof}

This algorithm can be easily extended to implementing oracles for products of $d$ circulants, in which $d$ oracles of factor circulants and $dL$ ancillary qubits are needed. Though the oracle described in \autoref{thm:product} may not be useful in all quantum algorithms, due to the additional $\ket{\Phi_j}$ in \autoref{eq:prodoracle}, it is applicable in \autoref{sec:circulant} and \autoref{sec:Hamiltonian} according to \lref{thm:sum2} (the generalized form of \lref{thm:sum}) described below. It implies that this technique could also be useful in other algorithms related to circulant matrices.

\begin{lemma}
\label{thm:sum2}
Let $M = \sum_{\alpha_j}\alpha_j W_j$ be a linear combination of unitaries $W_j$ with $\alpha_j \geq 0$ for all $j$ and $\sum_j \alpha_j = 1$. Let $O_\alpha$ be any operator that satisfies $O_\alpha \ket{0^m} = \sum_j \sqrt{\alpha_j}\ket{j}\ket{\Phi_j}$ ($m$ is the number of qubits used to represent $\ket{j}\ket{\Phi_j}$) and $\mathrm{select}(W) = \sum_j\ket{j}\bra{j}\otimes I \otimes W_j$. Then
\begin{equation}
(O_\alpha^\dagger \otimes I)\mathrm{select}(W) (O_\alpha \otimes I)\ket{0^m}\ket{\psi} = \ket{0^m}M\ket{\psi} + \ket{\Psi^{\perp}},
\end{equation}
where $(\ket{0^m}\bra{0^m} \otimes I) \ket{\Psi^\perp} = 0$.
\end{lemma}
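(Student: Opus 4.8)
The plan is to verify the identity by a direct computation, tracking the three operators $O_\alpha\otimes I$, $\mathrm{select}(W)$ and $O_\alpha^\dagger\otimes I$ in succession, and then reading off $\ket{\Psi^\perp}$ as the part of the output orthogonal to $\ket{0^m}$ in the ancilla register.

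First I would apply $O_\alpha\otimes I$ to the input state, obtaining $(O_\alpha\otimes I)\ket{0^m}\ket{\psi} = \sum_j \sqrt{\alpha_j}\,\ket{j}\ket{\Phi_j}\ket{\psi}$. Since $\mathrm{select}(W) = \sum_j \ket{j}\bra{j}\otimes I\otimes W_j$ conditions only on the index register $\ket{j}$ and acts as the identity on the $\ket{\Phi_j}$ register, applying it gives $\sum_j \sqrt{\alpha_j}\,\ket{j}\ket{\Phi_j}W_j\ket{\psi}$.

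Next I would project onto $\ket{0^m}$ after the final $O_\alpha^\dagger\otimes I$. Using $\bra{0^m}O_\alpha^\dagger = (O_\alpha\ket{0^m})^\dagger = \sum_j \sqrt{\alpha_j}\,\bra{j}\bra{\Phi_j}$, one finds
\begin{equation*}
(\ket{0^m}\bra{0^m}\otimes I)\,(O_\alpha^\dagger\otimes I)\sum_{j'}\sqrt{\alpha_{j'}}\,\ket{j'}\ket{\Phi_{j'}}W_{j'}\ket{\psi}
= \ket{0^m}\!\sum_{j,j'}\!\sqrt{\alpha_j\alpha_{j'}}\,\braket{j|j'}\braket{\Phi_j|\Phi_{j'}}\,W_{j'}\ket{\psi}.
\end{equation*}
Orthonormality of the index states, $\braket{j|j'}=\delta_{jj'}$, collapses the double sum to $j=j'$; then $\braket{\Phi_j|\Phi_j}=1$ because each $\ket{\Phi_j}$ is a unit vector, and the remainder is $\ket{0^m}\sum_j\alpha_j W_j\ket{\psi}=\ket{0^m}M\ket{\psi}$. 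Setting $\ket{\Psi^\perp}:=(O_\alpha^\dagger\otimes I)\mathrm{select}(W)(O_\alpha\otimes I)\ket{0^m}\ket{\psi}-\ket{0^m}M\ket{\psi}$ then gives the stated decomposition, with $(\ket{0^m}\bra{0^m}\otimes I)\ket{\Psi^\perp}=0$ automatic by construction.

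The only step needing any care — and the sole place this argument departs from that of \lref{thm:sum} — is the appearance of the cross terms $\braket{\Phi_j|\Phi_{j'}}$ for $j\neq j'$: I would emphasise that these are annihilated by the accompanying factor $\braket{j|j'}$, so the possibly non-orthogonal ancillas $\{\ket{\Phi_j}\}$ never interfere. This is precisely why adjoining $\ket{\Phi_j}$ to the index register, as produced by the oracle of \autoref{thm:product}, causes no harm when that oracle is substituted into the constructions of \autoref{sec:circulant} and \autoref{sec:Hamiltonian}.
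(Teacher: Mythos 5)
Your proof is correct and follows essentially the same route as the paper's: expand $(O_\alpha\otimes I)\ket{0^m}\ket{\psi}$, let $\mathrm{select}(W)$ act diagonally on the index register, and project onto $\ket{0^m}$ via $\bra{0^m}O_\alpha^\dagger=\sum_j\sqrt{\alpha_j}\bra{j}\bra{\Phi_j}$, with $\braket{j|j'}=\delta_{jj'}$ killing the cross terms so that the normalisation $\braket{\Phi_j|\Phi_j}=1$ is all that is needed. Your explicit remark that the possibly non-orthogonal ancillas $\ket{\Phi_j}$ never interfere is exactly the point the paper's computation makes implicitly.
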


\begin{proof}
\[
\begin{split}
(O_\alpha^\dagger \otimes I)\mathrm{select}(W) (O_\alpha \otimes I)\ket{0^m}\ket{\psi} &= (O_\alpha^\dagger \otimes I) \mathrm{select}(W) \sum_j \sqrt{\alpha_j}\ket{j}\ket{\Phi_j}\ket{\psi} \\&= (O_\alpha^\dagger \otimes I)\sum_j \sqrt{\alpha_j}\ket{j}\ket{\Phi_j} W_j\ket{\psi}
\end{split}
\]
\[
\begin{split}
(\ket{0^m}\bra{0^m}O_\alpha^\dagger \otimes I)\sum_j \sqrt{\alpha_j}\ket{j}\ket{\Phi_j} W_j\ket{\psi} &=
\ket{0^m}\sum_{j'} \sqrt{\alpha_{j'}}\bra{j'}\bra{\Phi_{j'}} \sum_j \sqrt{\alpha_j}\ket{j}\ket{\Phi_j} W_j\ket{\psi}\\
&=\ket{0^m} \sum_j \alpha_j  W_j\ket{\psi} = \ket{0^m} M \ket{\psi}
\end{split}
\]
\end{proof}

\section{Application: Solving Cyclic Systems}
\label{sec:cyclic}
Vibration analysis of mechanical structures with cyclic symmetry has been a subject of considerable studies in acoustics and mechanical engineering~\cite{Olson2014, kaveh2011block}.  Here we provide an example where the above proposed quantum scheme can outperform classical algorithms in solving the equation of motion for vibrating and rotating systems with certain cyclic symmetry.  

The equation of motion for a cyclically symmetric system consisting of $N$ identical sectors, as shown in~\autoref{fig:cyclic}, can be written as 
\begin{equation}
\label{eq:eom-gen}
M \ddot{\gv{q}} + D \dot{\gv{q}} + K \gv{q} = \gv{f},
\end{equation}
where $\gv{q}$ and $\gv{f}$ are $N$-dimensional vectors, denoting the displacement of and the external force acting on each individual sector, respectively.  The mass, damping and stiffness matrices are all circulants, represented by $M=circ(m_1, m_2, ..., m_N)$, $D=circ(d_1, d_2, ..., d_N)$ and $K=circ(s_1, s_2, ..., s_N)$. 

\begin{figure}[ht]
\centering
\subfigure[~]{\includegraphics[width=2in]{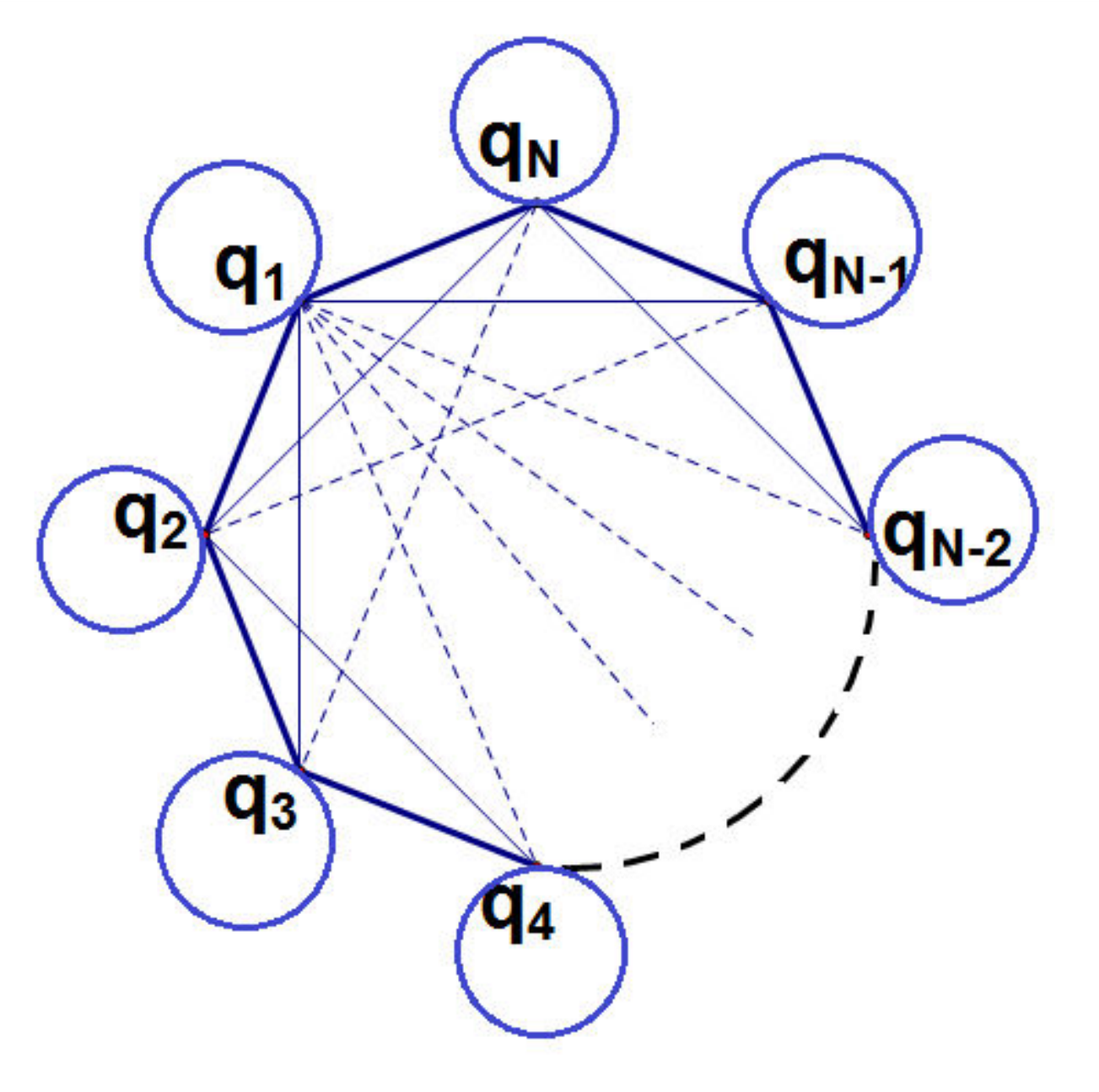}}
\hspace{0.2in}
\subfigure[~]{\includegraphics[width=2in]{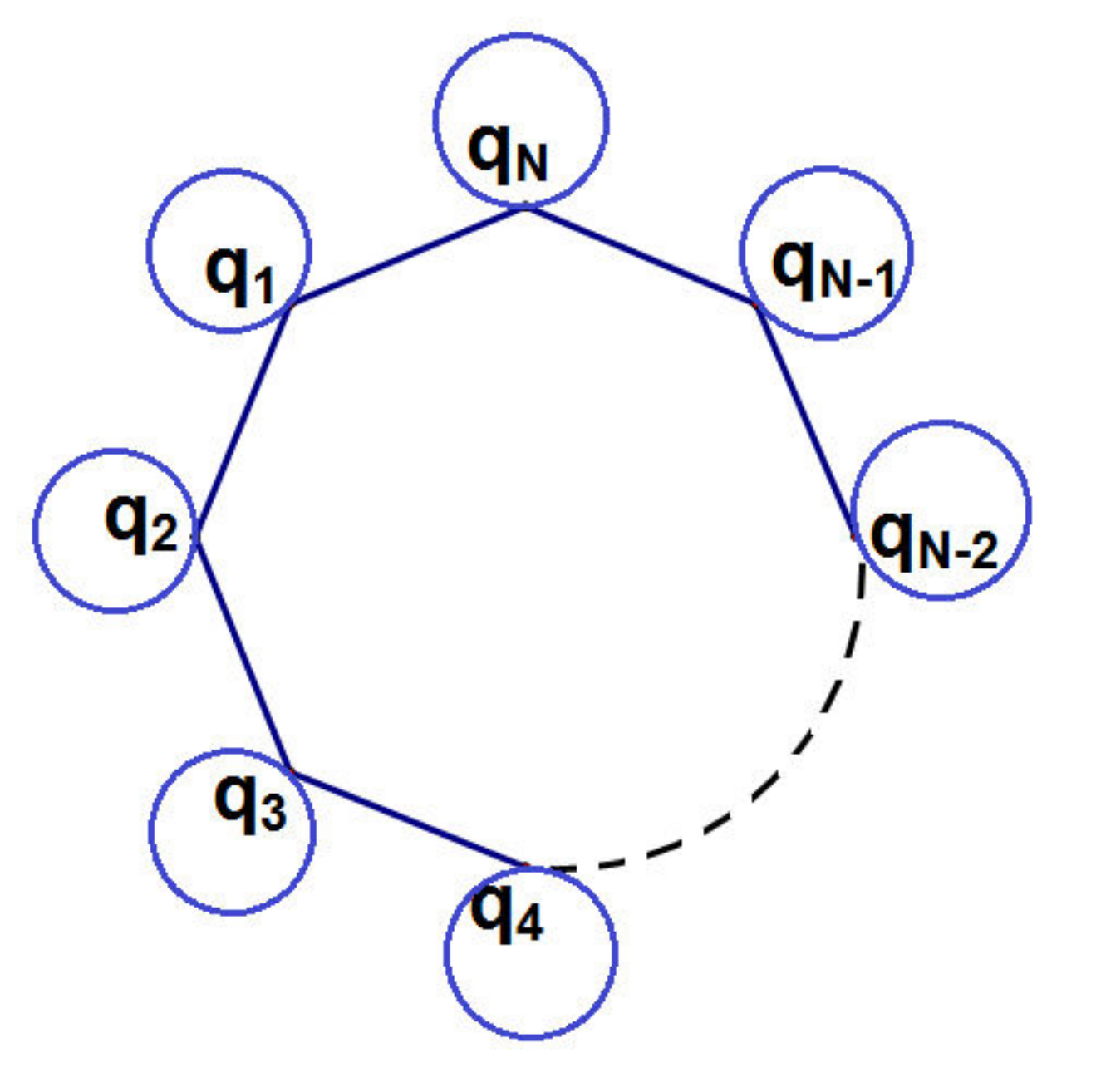}}
\caption{Topology diagram of an $N$-sector cyclic system. (a)~a general cyclic system with coupling between any two sectors which can be solved using \autoref{thm:hhl}. (b)~a cyclic system with nearest-neighbour coupling which can be solved using the HHL algorithm~\cite{harrow2009}.}
\label{fig:cyclic}
\end{figure}

Assume all sectors have the same mass ($M \propto I$) and there is zero damping ($D = 0$). If the system is under the so-called traveling wave engine order excitation, the equation of motion can be simplified as~\cite{Olson2014}: 
\begin{equation}
\label{eq:eom}
\ddot{\gv{q}} + K \gv{q} = \gv{f}e^{i n \Omega t},
\end{equation}
where the traveling wave is characterised by $f_j = f e^{i2\pi n j/N}$ for the external force vector $\gv{f}$, $n$ is the order of excitation, and $\Omega$ is the angular frequency of the excitation. We search for solutions of the form $\gv{q} = \gv{q_0} e^{i n \Omega t}$, which leads to
\begin{equation}
(K - n\Omega I)\gv{q_0} = \gv{f}.
\end{equation}
Since $K- n\Omega I$ is a circulant matrix, we can use \autoref{thm:hhl} to calculate 
$$\gv{q_0} = (K - n\Omega I)^{-1}\gv{f}.$$ 
It is important to consider the conditions under which \autoref{thm:hhl} works:
\begin{enumerate}
\item {\em $K- n\Omega I$ is Hermitian.} This is generally true for symmetric cyclic systems, where the coupling between $q_j$ and $q_{j+d}$ and the coupling between $q_j$ and $q_{j-d}$ are physically the same for any sector $j$ and distance $d$. 

\item {\em $K- n\Omega I$ has non-negative (or non-positive) entries.}  Although this is not in general true, \autoref{thm:hhl} will work under a slight modification. We observe that the off-diagonal elements of $K-n\Omega I$ are always negative because the coupling force between two connecting sectors is always in the opposite direction to their relative motion. 
\begin{itemize}
\item If the diagonal elements of $K-n\Omega I$ are also negative, then no modification to the proposed procedure is necessary. 
\item If the diagonal elements of $K-n\Omega I$ are positive, we replace $V_0$ with $-V_0$ in \autoref{eq:decomposition}, while keeping $V_j~(j\neq 0)$ unchanged. Then $-(K-n\Omega I) = - c_0V_0 + \sum_{j=1}^{N-1}c_jV_j$ would be a matrix whose off-diagonal elements are positive and diagonal elements are negative. It means that in the quantum circuits, we need to replace $\mathrm{select}(V) = \sum_{j=0}^{N-1}\ket{j}\bra{j}\otimes V_j$ with $\mathrm{Ref}_{0}\cdot\mathrm{select}(V) = -\ket{j=0}\bra{j=0}\otimes V_0 + \sum_{j=1}^{N-1}\ket{j}\bra{j}\otimes V_j$, where $\mathrm{Ref}_0 = \ket{0^L}\bra{0^L}-2I$ operating on the first register is the reflection operator around $\ket{j=0}=\ket{0^L}$.
\end{itemize}

\item {\em The condition number $\kappa$ of $K - n\Omega I$ is small.} This is true when the couplings among sectors are relatively weak --- when $\abs{K_0 - n\Omega} \gg K_1$ where $K_0$ characterises the coupling between a sector and the exterior and $K_1$ characterizes the coupling among sectors. 

\end{enumerate}
If all three conditions are satisfied, we have an exponential speed-up compared to classical computation. Note that the output $\gv{q_0}$ is stored in quantum amplitudes, which cannot be read out directly.  However, further computation steps can efficiently provide practically useful information about the system from the vector $\gv{q_0}$, for example the expectation value $\gv{q_0}^\dagger M \gv{q_0}$ for some linear operator $M$ or the similarity between two cyclic systems $\braket{q'_0|q_0}$~\cite{harrow2009}.  It is also worth noting that the proposed algorithm, in contrast to previous quantum algorithms~\cite{berry2007efficient, harrow2009, childs2010simulating, wiebe2011simulating, poulin2011quantum, berry2015simulating, mahasinghe2016efficient}, works for dense matrices $K- n\Omega I$.  It means that the cyclic systems need not be subject to nearest-neighbour coupling.

\section{Conclusion}
\label{sec:conclusion}

In this paper, we present efficient quantum algorithms for implementing circulant (as well as Toeplitz and Hankel) matrices and block circulant matrices with special structures, which are not necessarily sparse or unitary.  These matrices have practically significant applications in physics, mathematics and engineering related field.  The proposed algorithms provide exponential speed-up over classical algorithms, requiring fewer resources ($2\log N$ qubits) and having lower complexity ($O(\log^2 N/\norm{C\ket{\psi}})$) in comparison with existing quantum algorithms.  Consequently, they perform better in quantum computing and are more feasible to experimental realisation with current technology. 

Besides the implementation of circulant matrices, we discover that we can perform the HHL algorithm on circulant matrices to implement the inverse of circulant matrices, by adopting the Taylor series approach to efficiently simulate circulant Hamiltonians. Due to the special structure of circulant matrices, we prove that they are one of the types of the dense matrices that can be efficiently simulated. Being able to implement the inverse of circulant matrices opens a door to solving a variety of real-world problems, for example, solving cyclic systems in vibration analysis. Finally, we show that it is possible to construct oracles for products of circulant matrices using the oracles for their factor circulants, a technique that is useful in related algorithms.

\section*{Acknowledgments}

We would like to thank Xiaosong Ma, Anuradha Mahasinghe, Jie Pan, Thomas Loke, and Shengjun Wu for helpful discussions.

\bibliographystyle{model1-num-names}
\bibliography{circulant-refs}

\begin{thebibliography}{45}
\expandafter\ifx\csname natexlab\endcsname\relax\def\natexlab#1{#1}\fi
\providecommand{\bibinfo}[2]{#2}
\ifx\xfnm\relax \def\xfnm[#1]{\unskip,\space#1}\fi
\bibitem[{Nielsen and Chuang(2010)}]{nielsen2010quantum}
\bibinfo{author}{M.~A. Nielsen}, \bibinfo{author}{I.~L. Chuang},
  \bibinfo{title}{Quantum computation and quantum information},
  \bibinfo{publisher}{Cambridge university press}, \bibinfo{year}{2010}.
\bibitem[{Shor(1997)}]{shor1997}
\bibinfo{author}{P.~Shor},
\newblock \bibinfo{title}{{Polynomial-Time} algorithms for prime factorization
  and discrete logarithms on a quantum computer},
\newblock \bibinfo{journal}{{SIAM} Journal on Computing} \bibinfo{volume}{26}
  (\bibinfo{year}{1997}) \bibinfo{pages}{1484--1509}.
\bibitem[{Grover(1996)}]{grover1996}
\bibinfo{author}{L.~K. Grover},
\newblock \bibinfo{title}{A fast quantum mechanical algorithm for database
  search},
\newblock in: \bibinfo{booktitle}{Proceedings of the Twenty-eighth Annual {ACM}
  Symposium on Theory of Computing}, {STOC} '96, \bibinfo{publisher}{{ACM}},
  \bibinfo{address}{New York, {NY}, {USA}}, \bibinfo{year}{1996}, p.
  \bibinfo{pages}{212{\textendash}219}.
\bibitem[{Lloyd(1996)}]{lloyd1996universal}
\bibinfo{author}{S.~Lloyd},
\newblock \bibinfo{title}{Universal quantum simulators},
\newblock \bibinfo{journal}{Science} \bibinfo{volume}{273}
  (\bibinfo{year}{1996}) \bibinfo{pages}{1073}.
\bibitem[{Berry et~al.(2007)Berry, Ahokas, Cleve, and
  Sanders}]{berry2007efficient}
\bibinfo{author}{D.~W. Berry}, \bibinfo{author}{G.~Ahokas},
  \bibinfo{author}{R.~Cleve}, \bibinfo{author}{B.~C. Sanders},
\newblock \bibinfo{title}{Efficient quantum algorithms for simulating sparse
  hamiltonians},
\newblock \bibinfo{journal}{Communications in Mathematical Physics}
  \bibinfo{volume}{270} (\bibinfo{year}{2007}) \bibinfo{pages}{359--371}.
\bibitem[{Childs and Kothari(2010)}]{childs2010simulating}
\bibinfo{author}{A.~M. Childs}, \bibinfo{author}{R.~Kothari},
\newblock \bibinfo{title}{Simulating sparse hamiltonians with star
  decompositions},
\newblock in: \bibinfo{booktitle}{Theory of Quantum Computation, Communication,
  and Cryptography, TQC 2010}, \bibinfo{publisher}{Eds. W. van Dam, V. M.
  Kendon, S. Severini}, \bibinfo{year}{2010}, pp. \bibinfo{pages}{94--103}.
\bibitem[{Wiebe et~al.(2011)Wiebe, Berry, H{\o}yer, and
  Sanders}]{wiebe2011simulating}
\bibinfo{author}{N.~Wiebe}, \bibinfo{author}{D.~W. Berry},
  \bibinfo{author}{P.~H{\o}yer}, \bibinfo{author}{B.~C. Sanders},
\newblock \bibinfo{title}{Simulating quantum dynamics on a quantum computer},
\newblock \bibinfo{journal}{Journal of Physics A: Mathematical and Theoretical}
  \bibinfo{volume}{44} (\bibinfo{year}{2011}) \bibinfo{pages}{445308}.
\bibitem[{Poulin et~al.(2011)Poulin, Qarry, Somma, and
  Verstraete}]{poulin2011quantum}
\bibinfo{author}{D.~Poulin}, \bibinfo{author}{A.~Qarry},
  \bibinfo{author}{R.~Somma}, \bibinfo{author}{F.~Verstraete},
\newblock \bibinfo{title}{Quantum simulation of time-dependent hamiltonians and
  the convenient illusion of hilbert space},
\newblock \bibinfo{journal}{Physical review letters} \bibinfo{volume}{106}
  (\bibinfo{year}{2011}) \bibinfo{pages}{170501}.
\bibitem[{Berry et~al.(2015{\natexlab{a}})Berry, Childs, Cleve, Kothari, and
  Somma}]{berry2015simulating}
\bibinfo{author}{D.~W. Berry}, \bibinfo{author}{A.~M. Childs},
  \bibinfo{author}{R.~Cleve}, \bibinfo{author}{R.~Kothari},
  \bibinfo{author}{R.~D. Somma},
\newblock \bibinfo{title}{Simulating hamiltonian dynamics with a truncated
  taylor series},
\newblock \bibinfo{journal}{Physical review letters} \bibinfo{volume}{114}
  (\bibinfo{year}{2015}{\natexlab{a}}) \bibinfo{pages}{090502}.
\bibitem[{Berry et~al.(2015{\natexlab{b}})Berry, Childs, and
  Kothari}]{berry2015hamiltonian}
\bibinfo{author}{D.~W. Berry}, \bibinfo{author}{A.~M. Childs},
  \bibinfo{author}{R.~Kothari},
\newblock \bibinfo{title}{Hamiltonian simulation with nearly optimal dependence
  on all parameters},
\newblock in: \bibinfo{booktitle}{Proceedings: IEEE 56th annual symposium on
  foundations of computer science}, \bibinfo{organization}{IEEE}, pp.
  \bibinfo{pages}{792--809}.
\bibitem[{Harrow et~al.(2009)Harrow, Hassidim, and Lloyd}]{harrow2009}
\bibinfo{author}{A.~W. Harrow}, \bibinfo{author}{A.~Hassidim},
  \bibinfo{author}{S.~Lloyd},
\newblock \bibinfo{title}{Quantum algorithm for linear systems of equations},
\newblock \bibinfo{journal}{Physical review letters} \bibinfo{volume}{103}
  (\bibinfo{year}{2009}) \bibinfo{pages}{150502}.
\bibitem[{Childs and Kothari(2009)}]{childs2009limitations}
\bibinfo{author}{A.~M. Childs}, \bibinfo{author}{R.~Kothari},
\newblock \bibinfo{title}{Limitations on the simulation of non-sparse
  hamiltonians},
\newblock \bibinfo{journal}{arXiv:0908.4398}  (\bibinfo{year}{2009}).
\bibitem[{Rietsch(2003)}]{rietsch2003totally}
\bibinfo{author}{K.~Rietsch},
\newblock \bibinfo{title}{Totally positive toeplitz matrices and quantum
  cohomology of partial flag varieties},
\newblock \bibinfo{journal}{Journal of the American Mathematical Society}
  \bibinfo{volume}{16} (\bibinfo{year}{2003}) \bibinfo{pages}{363--392}.
\bibitem[{Haupt et~al.(2010)Haupt, Bajwa, Raz, and Nowak}]{haupt2010toeplitz}
\bibinfo{author}{J.~Haupt}, \bibinfo{author}{W.~U. Bajwa},
  \bibinfo{author}{G.~Raz}, \bibinfo{author}{R.~Nowak},
\newblock \bibinfo{title}{Toeplitz compressed sensing matrices with
  applications to sparse channel estimation},
\newblock \bibinfo{journal}{Information Theory, IEEE Transactions on}
  \bibinfo{volume}{56} (\bibinfo{year}{2010}) \bibinfo{pages}{5862--5875}.
\bibitem[{Noschese et~al.(2013)Noschese, Pasquini, and
  Reichel}]{noschese2013tridiagonal}
\bibinfo{author}{S.~Noschese}, \bibinfo{author}{L.~Pasquini},
  \bibinfo{author}{L.~Reichel},
\newblock \bibinfo{title}{Tridiagonal toeplitz matrices: properties and novel
  applications},
\newblock \bibinfo{journal}{Numerical Linear Algebra with Applications}
  \bibinfo{volume}{20} (\bibinfo{year}{2013}) \bibinfo{pages}{302--326}.
\bibitem[{Olson et~al.(2014)Olson, Shaw, Shi, Pierre, and Parker}]{Olson2014}
\bibinfo{author}{B.~J. Olson}, \bibinfo{author}{S.~W. Shaw},
  \bibinfo{author}{C.~Shi}, \bibinfo{author}{C.~Pierre}, \bibinfo{author}{R.~G.
  Parker},
\newblock \bibinfo{title}{Circulant matrices and their application to vibration
  analysis},
\newblock \bibinfo{journal}{Applied Mechanics Reviews} \bibinfo{volume}{66}
  (\bibinfo{year}{2014}) \bibinfo{pages}{040803}.
\bibitem[{Ng(2004)}]{ng2004iterative}
\bibinfo{author}{M.~K. Ng}, \bibinfo{title}{Iterative methods for Toeplitz
  systems}, \bibinfo{publisher}{Oxford University Press, USA},
  \bibinfo{year}{2004}.
\bibitem[{Peller(2012)}]{peller2012hankel}
\bibinfo{author}{V.~Peller}, \bibinfo{title}{Hankel operators and their
  applications}, \bibinfo{publisher}{Springer Science \& Business Media},
  \bibinfo{year}{2012}.
\bibitem[{Kaveh and Rahami(2011)}]{kaveh2011block}
\bibinfo{author}{A.~Kaveh}, \bibinfo{author}{H.~Rahami},
\newblock \bibinfo{title}{Block circulant matrices and applications in free
  vibration analysis of cyclically repetitive structures},
\newblock \bibinfo{journal}{Acta Mechanica} \bibinfo{volume}{217}
  (\bibinfo{year}{2011}) \bibinfo{pages}{51--62}.
\bibitem[{Rjasanow(1994)}]{rjasanow1994effective}
\bibinfo{author}{S.~Rjasanow},
\newblock \bibinfo{title}{Effective algorithms with circulant-block matrices},
\newblock \bibinfo{journal}{Linear algebra and its applications}
  \bibinfo{volume}{202} (\bibinfo{year}{1994}) \bibinfo{pages}{55--69}.
\bibitem[{Tee(2005)}]{tee2005eigenvectors}
\bibinfo{author}{G.~J. Tee}, \bibinfo{title}{Eigenvectors of block circulant
  and alternating circulant matrices}, \bibinfo{publisher}{Massey University},
  \bibinfo{year}{2005}.
\bibitem[{Combescure(2009)}]{combescure2009block}
\bibinfo{author}{M.~Combescure},
\newblock \bibinfo{title}{Block-circulant matrices with circulant blocks, weil
  sums, and mutually unbiased bases. ii. the prime power case},
\newblock \bibinfo{journal}{Journal of Mathematical Physics}
  \bibinfo{volume}{50} (\bibinfo{year}{2009}) \bibinfo{pages}{032104}.
\bibitem[{Petrou and Petrou(2010)}]{petrou2010image}
\bibinfo{author}{M.~Petrou}, \bibinfo{author}{C.~Petrou}, \bibinfo{title}{Image
  processing: the fundamentals}, \bibinfo{publisher}{John Wiley \& Sons},
  \bibinfo{year}{2010}.
\bibitem[{Delanty and Steel(2012)}]{Delanty2012}
\bibinfo{author}{M.~Delanty}, \bibinfo{author}{M.~J. Steel},
\newblock \bibinfo{title}{Discretely-observable continuous time quantum walks
  on mo\"obius strips and other exotic structures in 3d integrated photonics},
\newblock \bibinfo{journal}{Physical Review A} \bibinfo{volume}{86}
  (\bibinfo{year}{2012}) \bibinfo{pages}{043821}.
\bibitem[{Qiang et~al.(2016)Qiang, Loke, Montanaro, Aungskunsiri, Zhou,
  O'Brien, Wang, and Matthews}]{qiang2016efficient}
\bibinfo{author}{X.~Qiang}, \bibinfo{author}{T.~Loke},
  \bibinfo{author}{A.~Montanaro}, \bibinfo{author}{K.~Aungskunsiri},
  \bibinfo{author}{X.~Zhou}, \bibinfo{author}{J.~L. O'Brien},
  \bibinfo{author}{J.~Wang}, \bibinfo{author}{J.~C. Matthews},
\newblock \bibinfo{title}{Efficient quantum walk on a quantum processor},
\newblock \bibinfo{journal}{Nature Communications} \bibinfo{volume}{7}
  (\bibinfo{year}{2016}).
\bibitem[{Ye and Lim(2016)}]{ye2015every}
\bibinfo{author}{K.~Ye}, \bibinfo{author}{L.~H. Lim},
\newblock \bibinfo{title}{Every matrix is a product of toeplitz matrices},
\newblock \bibinfo{journal}{Foundations of Computational Mathematics}
  \bibinfo{volume}{16} (\bibinfo{year}{2016}) \bibinfo{pages}{577--598}.
\bibitem[{Golub and Van~Loan(2012)}]{golub2012}
\bibinfo{author}{G.~H. Golub}, \bibinfo{author}{C.~F. Van~Loan},
  \bibinfo{title}{Matrix computations}, volume~\bibinfo{volume}{3},
  \bibinfo{publisher}{JHU Press}, \bibinfo{year}{2012}.
\bibitem[{Draper(2000)}]{draper2000}
\bibinfo{author}{T.~G. Draper},
\newblock \bibinfo{title}{Addition on a quantum computer},
\newblock \bibinfo{journal}{{arXiv:quant-ph/0008033}}  (\bibinfo{year}{2000}).
\bibitem[{Maynard and Pius(2013{\natexlab{a}})}]{maynard2013}
\bibinfo{author}{C.~M. Maynard}, \bibinfo{author}{E.~Pius},
\newblock \bibinfo{title}{A quantum multiply-accumulator},
\newblock \bibinfo{journal}{Quantum Information Processing}
  \bibinfo{volume}{13} (\bibinfo{year}{2013}{\natexlab{a}})
  \bibinfo{pages}{1127--1138}.
\bibitem[{Maynard and Pius(2013{\natexlab{b}})}]{maynard2013-a}
\bibinfo{author}{C.~M. Maynard}, \bibinfo{author}{E.~Pius},
\newblock \bibinfo{title}{Integer arithmetic with hybrid {Quantum-Classical}
  circuits},
\newblock \bibinfo{journal}{{arXiv:1304.4069}}
  (\bibinfo{year}{2013}{\natexlab{b}}).
\bibitem[{Pavlidis and Gizopoulos(2014)}]{pavlidis2014}
\bibinfo{author}{A.~Pavlidis}, \bibinfo{author}{D.~Gizopoulos},
\newblock \bibinfo{title}{Fast quantum modular exponentiation architecture for
  shor's factoring algorithm},
\newblock \bibinfo{journal}{Quantum Info. Comput.} \bibinfo{volume}{14}
  (\bibinfo{year}{2014}) \bibinfo{pages}{649{\textendash}682}.
\bibitem[{Cuccaro et~al.(2004)Cuccaro, Draper, Kutin, and
  Moulton}]{cuccaro2004}
\bibinfo{author}{S.~A. Cuccaro}, \bibinfo{author}{T.~G. Draper},
  \bibinfo{author}{S.~A. Kutin}, \bibinfo{author}{D.~P. Moulton},
\newblock \bibinfo{title}{A new quantum ripple-carry addition circuit},
\newblock \bibinfo{journal}{{arXiv:quant-ph/0410184}}  (\bibinfo{year}{2004}).
\bibitem[{Draper et~al.(2006)Draper, Kutin, Rains, and Svore}]{draper2006}
\bibinfo{author}{T.~G. Draper}, \bibinfo{author}{S.~A. Kutin},
  \bibinfo{author}{E.~M. Rains}, \bibinfo{author}{K.~M. Svore},
\newblock \bibinfo{title}{A logarithmic-depth quantum carry-lookahead adder},
\newblock \bibinfo{journal}{Quantum Info. Comput.} \bibinfo{volume}{6}
  (\bibinfo{year}{2006}) \bibinfo{pages}{351{\textendash}369}.
\bibitem[{Brassard et~al.(2002)Brassard, Hoyer, Mosca, and
  Tapp}]{brassard2002quantum}
\bibinfo{author}{G.~Brassard}, \bibinfo{author}{P.~Hoyer},
  \bibinfo{author}{M.~Mosca}, \bibinfo{author}{A.~Tapp},
\newblock \bibinfo{title}{Quantum amplitude amplification and estimation},
\newblock \bibinfo{journal}{Contemporary Mathematics} \bibinfo{volume}{305}
  (\bibinfo{year}{2002}) \bibinfo{pages}{53--74}.
\bibitem[{Berry et~al.(2014)Berry, Childs, Cleve, Kothari, and
  Somma}]{berry2014exponential}
\bibinfo{author}{D.~W. Berry}, \bibinfo{author}{A.~M. Childs},
  \bibinfo{author}{R.~Cleve}, \bibinfo{author}{R.~Kothari},
  \bibinfo{author}{R.~D. Somma},
\newblock \bibinfo{title}{Exponential improvement in precision for simulating
  sparse hamiltonians},
\newblock in: \bibinfo{booktitle}{Proceedings of the 46th Annual ACM Symposium
  on Theory of Computing}, \bibinfo{organization}{ACM}, pp.
  \bibinfo{pages}{283--292}.
\bibitem[{Grover and Rudolph(2002)}]{grover2002}
\bibinfo{author}{L.~Grover}, \bibinfo{author}{T.~Rudolph},
\newblock \bibinfo{title}{Creating superpositions that correspond to
  efficiently integrable probability distributions},
\newblock \bibinfo{journal}{arXiv quant-ph/0208112}  (\bibinfo{year}{2002}).
\bibitem[{Kaye and Mosca(2004)}]{kaye2004}
\bibinfo{author}{P.~Kaye}, \bibinfo{author}{M.~Mosca},
\newblock \bibinfo{title}{Quantum networks for generating arbitrary quantum
  states},
\newblock \bibinfo{journal}{arXiv quant-ph/0407102}  (\bibinfo{year}{2004}).
\bibitem[{Soklakov and Schack(2006)}]{soklakov2006}
\bibinfo{author}{A.~N. Soklakov}, \bibinfo{author}{R.~Schack},
\newblock \bibinfo{title}{Efficient state preparation for a register of quantum
  bits},
\newblock \bibinfo{journal}{Physical Review A} \bibinfo{volume}{73}
  (\bibinfo{year}{2006}) \bibinfo{pages}{012307}.
\bibitem[{Giovannetti et~al.(2008)Giovannetti, Lloyd, and
  Maccone}]{giovannetti2008}
\bibinfo{author}{V.~Giovannetti}, \bibinfo{author}{S.~Lloyd},
  \bibinfo{author}{L.~Maccone},
\newblock \bibinfo{title}{Architectures for a quantum random access memory},
\newblock \bibinfo{journal}{Physical Review A} \bibinfo{volume}{78}
  (\bibinfo{year}{2008}) \bibinfo{pages}{052310}.
\bibitem[{Lloyd et~al.(2013)Lloyd, Mohseni, and Rebentrost}]{lloyd2013}
\bibinfo{author}{S.~Lloyd}, \bibinfo{author}{M.~Mohseni},
  \bibinfo{author}{P.~Rebentrost},
\newblock \bibinfo{title}{Quantum algorithms for supervised and unsupervised
  machine learning},
\newblock \bibinfo{journal}{{arXiv:1307.0411}}  (\bibinfo{year}{2013}).
\bibitem[{Mahasinghe and Wang(2016)}]{mahasinghe2016efficient}
\bibinfo{author}{A.~Mahasinghe}, \bibinfo{author}{J.~B. Wang},
\newblock \bibinfo{title}{Efficient quantum circuits for toeplitz and hankel
  matrices},
\newblock \bibinfo{journal}{Journal of Physics A: Mathematical and Theoretical}
  \bibinfo{volume}{49} (\bibinfo{year}{2016}) \bibinfo{pages}{275301}.
\bibitem[{Lu et~al.(2015)Lu, Liu, Wang, Chen, Li, Yao, Li, Liu, Peng, Sanders,
  Chen, and Pan}]{lu2015universal}
\bibinfo{author}{H.~Lu}, \bibinfo{author}{C.~Liu}, \bibinfo{author}{D.~S.
  Wang}, \bibinfo{author}{L.~K. Chen}, \bibinfo{author}{Z.~D. Li},
  \bibinfo{author}{X.~C. Yao}, \bibinfo{author}{L.~Li}, \bibinfo{author}{N.~L.
  Liu}, \bibinfo{author}{C.~Z. Peng}, \bibinfo{author}{B.~C. Sanders},
  \bibinfo{author}{Y.~A. Chen}, \bibinfo{author}{J.~W. Pan},
\newblock \bibinfo{title}{Universal digital photonic single-qubit quantum
  channel simulator},
\newblock \bibinfo{journal}{arXiv:1505.02879}  (\bibinfo{year}{2015}).
\bibitem[{Wang and Sanders(2015)}]{wang2015quantum}
\bibinfo{author}{D.-S. Wang}, \bibinfo{author}{B.~C. Sanders},
\newblock \bibinfo{title}{Quantum circuit design for accurate simulation of
  qudit channels},
\newblock \bibinfo{journal}{New Journal of Physics} \bibinfo{volume}{17}
  (\bibinfo{year}{2015}) \bibinfo{pages}{043004}.
\bibitem[{Manouchehri and Wang(2014)}]{wang2013physical}
\bibinfo{author}{K.~Manouchehri}, \bibinfo{author}{J.~B. Wang},
  \bibinfo{title}{Physical implementation of quantum walks},
  \bibinfo{publisher}{Springer, Berlin}, \bibinfo{year}{2014}.
\bibitem[{Childs(2010)}]{childs2010relationship}
\bibinfo{author}{A.~M. Childs},
\newblock \bibinfo{title}{On the relationship between continuous-and
  discrete-time quantum walk},
\newblock \bibinfo{journal}{Communications in Mathematical Physics}
  \bibinfo{volume}{294} (\bibinfo{year}{2010}) \bibinfo{pages}{581--603}.

\end{thebibliography}

\end{document}